\documentclass[11pt]{article}
\usepackage{graphicx,latexsym}
\usepackage{amsmath}
\usepackage{amsthm}
\usepackage{amssymb}
\usepackage{colortbl}
\usepackage{array}
\usepackage{tabularx}
\usepackage{stmaryrd}
\newcommand{\avec}{{\bf a}}
\newcommand{\bvec}{{\bf b}}
\newcommand{\xvec}{{\bf x}}
\newcommand{\uvec}{{\bf u}}
\newcommand{\vvec}{{\bf v}}

\newcommand{\yvec}{{\bf y}}
\newcommand{\zvec}{{\bf z}}
\newcommand{\hvec}{{\bf h}}

\newcommand{\lnon}{\overline}

\DeclareMathOperator{\rank}{rank}

\newtheorem{remark}{Remark}
\newtheorem{proposition}{Proposition}

\newtheorem{theorem}{Theorem}

\begin{document}
\title{On the Number of Observation Nodes in Recurrent Neural Networks with Linear Threshold and ReLU Functions}
\author{Liangjie Sun, Wai-Ki Ching, and Tatsuya Akutsu
\thanks{Liangjie~Sun (corresponding author) is with the Institute for Life and Medical
Sciences, Kyoto University, Kyoto 606-8507, Japan (e-mail: ljsun\_seu@126.com).}
\thanks{Wai-Ki Ching is with the Department of Mathematics, The University of
Hong Kong, Pokfulam Road, Hong Kong (e-mail: wching@hku.hk).}
\thanks{Tatsuya Akutsu is with the Bioinformatics Center, Institute for Chemical Research, Kyoto University, Kyoto 611-0011, Japan (e-mail: takutsu@kuicr.kyoto-u.ac.jp).}}

\maketitle

\begin{abstract}
This paper investigates how node update rules and admissible state domains affect the minimum number of observation nodes required for global finite-horizon observability in recurrent neural networks with linear-threshold and ReLU update functions. Over a common binary state domain, we construct a class of $K$-linear-threshold ($K$-LT) networks whose initial states can be uniquely reconstructed from the finite output trajectory of a single observation node. We further establish a dynamical equivalence between binary-valued $K$-ReLU networks and $K$-AND Boolean networks, which transfers existing class-level observation-node bounds to binary-valued $K$-ReLU networks. For nonnegative-valued ReLU networks, the observability problem reduces to the classical linear-system setting whenever the relevant pre-activations remain nonnegative. For general real-valued ReLU networks, we prove that global finite-horizon observability requires at least $n/2$ observation nodes when no restriction is imposed on the number of state variables involved in each node update. This lower bound is tight when $n=2K$, for which we construct a $K$-ReLU network observable from exactly $K$ nodes. These results show that both update rules and state domains fundamentally affect extremal observation requirements: temporal evolution can concentrate finite-state information into a single measured trajectory, whereas activation-induced rank loss creates an intrinsic sensor lower bound in continuous-state ReLU networks.
\end{abstract}


\section{Introduction}\label{sec:introduction}
Observability concerns whether the initial state of a dynamical system can be uniquely determined based on time-varying measurement outputs \cite{Kalman1960, HermannKrener1977, Nijmeijer1982}. It serves as a crucial theoretical foundation for the reliable implementation of state estimation, fault diagnosis, and state-based prediction and control. However, in networked dynamical systems, directly measuring the state of every node is often difficult due to physical, technical, or economic constraints. For instance, in large-scale neural systems, only a very small fraction of neurons may be directly measurable; similarly, in power grids, the cost of installing phasor measurement units at every substation could be prohibitively high. Nevertheless, thanks to the dynamic coupling between nodes, measurements collected from a suitably chosen subset of nodes may still contain sufficient information to uniquely determine the initial state of the entire system \cite{liu13}. This gives rise to two fundamental questions: whether a given set of measurement nodes renders the network observable, and what the minimum number of nodes required to achieve observability is. In this paper, we refer to the nodes selected for measurement as observation nodes.

For finite-dimensional linear time-invariant systems, global state distinguishability is equivalent to the observability matrix satisfying the classical full-rank condition \cite{Kalman1960}. However, for nonlinear systems, several non-equivalent concepts of observability arise. Within the differential-geometric framework proposed by Hermann and Krener, two states are said to be indistinguishable if they produce identical input-output behavior under any admissible input; furthermore, the observability rank condition constitutes a sufficient condition for local weak observability \cite{HermannKrener1977,Sontag1984}. For autonomous discrete-time nonlinear systems, indistinguishability simplifies to the equality of output sequences generated by distinct initial states \cite{Nijmeijer1982}. This paper adopts a global, finite-time observability concept: any pair of distinct initial states must be distinguishable via their output sequences over the same finite time horizon.

ReLU dynamical networks, which can be viewed as recurrent neural networks with ReLU activation and shared parameters across time, form a structured subclass of discrete-time continuous piecewise-affine (PWA) systems. For initial states generating the same activation-pattern sequence over a finite horizon, the state and output trajectories depend affinely on the initial state. Globally, however, the finite-horizon observation map is PWA because the activation-pattern sequence may vary with the initial condition.  As noted in \cite{Bemporad2000,camlibel2006}, observability analysis for hybrid and PWA systems is significantly more complex than for linear systems, since the observability of the overall system generally cannot be inferred solely from the observability of the individual affine subsystems that compose it. Existing research has explored observability issues for PWA and hybrid systems from various perspectives, including mixed-integer tests for finite-step incremental observability \cite{Bemporad2000}, geometric rank conditions for continuous-time switched linear and linear hybrid systems \cite{Vidal2003}, verifiable conditions for trajectory observability in piecewise affine hybrid systems \cite{Collins2004}, and randomized methods for sampled-data PWA systems \cite{AzumaImura2007}. For a given ReLU system defined on a bounded domain, indistinguishability over a finite time horizon can also be tested by unrolling the dynamic process and applying standard mixed-integer encodings for ReLU units or satisfiability-based formulations \cite{Katz2017}. These studies and verification tools address observability or related injectivity tests for a prescribed system, output map, domain, and observation horizon. However, they neither characterize the minimum number of state coordinates that must be measured in the worst case for certain classes of ReLU dynamical networks, nor provide constructive methods to achieve such bounds.

Boolean networks (BNs) offer a complementary finite-state perspective, characterizing network dynamics through binary states and logical update rules \cite{Kauffman1969}. Their finite state space and explicit logical structure make them particularly well-suited for the precise analysis of global observability and the selection of observation nodes. Observability in BNs is closely related to state estimation \cite{wei2025,wang2025,Reveliotis2025}, disturbance decoupling \cite{zhangk2023,yakun2025}, system identification \cite{wang22siam,ji25,li25siam}, and state reconstructibility \cite{Fornasini2019,Yang2020,yang25siam}. Within this framework, a BN is considered observable if any two distinct initial states yield different output sequences within a finite time horizon. Extensive research has been conducted on selecting a minimal set of observation nodes to ensure the observability of a given BN. Existing approaches include graph-theoretic methods \cite{Laschov2013,weiss19}, and algebraic methods based on the matrix semi-tensor product \cite{zhu21,liuyang2022,liy2023,wangli2025,liyalu2025}. These studies primarily focus on optimization problems concerning specific instances: namely, determining the minimum set of observation nodes that guarantees observability, given complete knowledge of the dynamic characteristics of BNs. Recently, the study by \cite{sun24} shifted the research perspective from node selection for specific instances to class-level analysis; using methods such as information theory, it derived general upper and lower bounds on the minimum number of observation nodes required for several classes of BNs. Nevertheless, existing class-level results remain model specific, and a systematic comparison of the minimum numbers of observation nodes required in the best and worst cases for representative finite-state and continuous-state nonlinear network models under a common notion of global finite-horizon observability is still lacking.

Based on the above discussions, this paper focuses on two classes of recurrent neural networks. The first class consists of $K$-linear threshold ($K$-LT) networks, in which each node has a binary state and is updated by a linear threshold function involving $K$ input literals, where a literal refers to either a state variable or its complement. The second class consists of $K$-ReLU networks, in which each node is updated by applying the ReLU activation function to an affine combination of $K$ state variables. Depending on the admissible initial state domains, we further consider binary, nonnegative, and general real-valued $K$-ReLU networks

Under a common notion of global finite-horizon observability, this paper investigates how node update rules and admissible state domains affect the minimum number of observation nodes required for observability. Over the common binary state domain, we compare $K$-LT networks with binary-valued $K$-ReLU networks to isolate the effect of the update rule. Within the $K$-ReLU framework, we compare binary, nonnegative, and general real-valued state domains to determine how the admissible domain changes the realizable activation patterns and the resulting observation requirements. Along these two directions, we conduct a class-level extremal analysis. For $K$-LT networks, existing bounds for general BNs do not accurately characterize the best-case minimum observation requirement and, in particular, do not determine whether global finite-horizon observability can be achieved from a single observation node. We resolve this question by characterizing the extremal observation requirements of $K$-LT networks and providing explicit constructions attaining the corresponding bounds. For $K$-ReLU networks, we derive bounds on the minimum number of observation nodes under different state domains and identify activation patterns and the degrees of freedom in the initial state as key determinants of the observation requirements.

The contributions of this paper are summarized as follows, where $n$ denotes the number of nodes in each network under consideration.
\begin{itemize}
  \item[(i)] Over the binary state domain, we construct a family of $K$-LT networks that is observable from a single node and establish a finite observation horizon within which the initial state is uniquely determined. Since at least one observation node is necessary, the general lower bound and the best-case upper bound coincide at one. Thus, both bounds are tight.

  We further establish an equivalence between binary $K$-ReLU networks and $K$-AND BNs. Consequently, the corresponding general lower bound $[(1-K)+\frac{2^{K}-1}{2^{K}}\log_{2}(2^{K}-1)]n$ and the best-case upper bound $(\frac{2^{K}-K-1}{2^{K}-1})n$ for $K$-AND BNs apply to binary $K$-ReLU networks. Whenever $[(1-K)+\frac{2^{K}-1}{2^{K}}\log_{2}(2^{K}-1)]n>1$, the constructed $K$-LT networks require fewer observation nodes than every binary $K$-ReLU network.
  \item[(ii)] We next examine the effect of the state domain within the ReLU framework. For binary states, the observation-node bounds depend on both $n$ and $K$, as established through the equivalence with $K$-AND BNs. For nonnegative states, under the prescribed positivity assumptions, the ReLU operation does not alter the state update, and the network reduces to a positive affine system. Its observability is then characterized by the classical Kalman rank condition, and a single observation node is sufficient in the best case.

  For general real-valued states, state-dependent ReLU deactivation can reduce the rank of the finite-horizon observation map and make distinct initial states indistinguishable. Without imposing any restriction on the number of state variables on which each node update depends, we establish a general lower bound of $\frac{n}{2}$ on the number of observation nodes required for observability. For $n=2K$, we construct a general $K$-ReLU network and derive an explicit finite-step procedure for reconstructing its initial state from the output sequence of $K$ nodes.
\end{itemize}

The remainder of the paper is organized as follows. Section \ref{sec:problem formulation} introduces the
network models and formulates the minimum-node observability problem.
Section \ref{sec:linear threshold functions} studies the binary-state $K$-LT networks.
Section \ref{sec:relu functions} discusses binary, nonnegative, and general
real-valued $K$-ReLU networks. Section \ref{sec:simulation results} provides
computational validation, and Section \ref{sec:conclusion} concludes the paper.
Additionally, the notation used throughout this paper is summarized in Table \ref{tab:notations}.
\begin{table}[t]
\caption{Some Important Notation}
\label{tab:notations}
\centering
\small
\renewcommand{\arraystretch}{1.05}
\setlength{\tabcolsep}{4pt}

\begin{tabularx}{\columnwidth}{
    >{\raggedright\arraybackslash}p{0.26\columnwidth}
    >{\raggedright\arraybackslash}X
}
\hline
\textbf{Notations} & \textbf{Definitions} \\
\hline
$\mathbb{R}$ & The set of real numbers \\
$\mathbb{R}^{n}$ & The set of $n$-dimensional real column vectors \\
$\mathbb{R}_{\geq 0}^{n}$ & The set of $n$-dimensional nonnegative real column vectors \\
$\mathbb{R}^{n\times m}$ & The set of all $n\times m$ real matrices \\
$\mathbb{R}_{\geq 0}^{n\times m}$ & The set of all $n\times m$ real matrices with nonnegative entries \\
$\mathbb{Z}_{\geq0}$ & The set of nonnegative integers \\
$\mathbb{Z}_{>0}$ & The set of positive integers \\
$\{0,1\}^{n}$ & The set of $n$-dimensional binary column vectors \\
$\llbracket a,b\rrbracket$ & The integer interval $\{a,a+1,\ldots,b\}$, where $a,b\in\mathbb{Z}_{\geq0}$ and $a\leq b$ \\
$\mathbf{1}_n$ & The $n$-dimensional all-ones column vector \\
$\mathbf{0}_n$ & The $n$-dimensional all-zeros column vector \\
$I_n$ & The $n\times n$ identity matrix \\
$\mathbf{e}_{n,j},\bar{\mathbf{e}}_{n,j}$ &
The $n$-dimensional binary column vector $\mathbf{e}_{n,j}$ whose
$j$-th component is $1$ and whose other components are $0$, and its
complement $\bar{\mathbf{e}}_{n,j}:=\mathbf{1}_n-\mathbf{e}_{n,j}$ \\
$\gcd(a,b)$ & The greatest common divisor of integers $a$ and $b$ \\
$[\mathbf{x}]_k$ & The $k$-th component of the vector $\mathbf{x}$ \\
$[z]_+$ & The positive part of $z$, defined by $[z]_{+}=\max(z,0)$, where $z\in\mathbb{R}$ \\
\hline
\end{tabularx}
\end{table}

\section{Problem Formulation}\label{sec:problem formulation}
Consider the following synchronous RNN:
\begin{eqnarray*}
x_i(t+1) & = & f_i(x_1(t),\ldots,x_n(t)),\quad\forall i \in \llbracket 1,n\rrbracket,\\
y_j(t) & = & x_{\phi(j)}(t),\quad\forall j \in \llbracket 1,m\rrbracket,
\end{eqnarray*}
where $x_{i}$ and $y_{j}$ denote the state and output, respectively, and $\phi(j)$ is a function from $\llbracket 1,m\rrbracket$ to $\llbracket 1,n\rrbracket$. Throughout this paper, all vectors are understood to be column vectors unless otherwise specified. For notational simplicity, we write $\xvec=(x_1,\ldots,x_n)$ and $\yvec=(y_1,\ldots,y_m)$ for the column vectors $(x_1,\ldots,x_n)^{\mathsf T}$ and $(y_1,\ldots,y_m)^{\mathsf T}$, respectively. Here, $m$ is the number of observation nodes and $\phi(j)$ denotes the index of the state node selected as the $j$-th observation node.

In this paper, we focus on the following two types of activation functions $f_i(\cdot)$:
\begin{eqnarray*}
&&{\rm \textbf{Linear threshold function:}}\quad f_i(\xvec) = [\avec_{i} \cdot \xvec  \geq \theta_{i}],\\
&&{\rm \textbf{ReLU function:}}\quad f_{i}(\xvec) = \max(\avec_{i} \cdot \xvec + b_{i},0),
\end{eqnarray*}
where $\theta_{i},b_{i}\in\mathbb{R}$, $\avec_{i}\in\mathbb{R}^{n}$, $\avec_{i} \cdot \xvec$ denotes the inner product between two vectors $\avec_{i}$ and $\xvec$, and $[\avec_{i} \cdot \xvec  \geq \theta_{i}]$ is 1 if $\avec_{i} \cdot \xvec  \geq \theta_{i}$ and 0 otherwise. The former case corresponds to a BN with linear threshold
functions, where each node state $x_{i}$ is binary, whereas in the latter case,
each node state $x_{i}$ can take any real value.

We restrict our attention to RNNs in which the update function of each node depends on exactly $K$ relevant state variables. RNNs with linear-threshold and ReLU activation functions are referred to as $K$-LT networks and $K$-ReLU networks, respectively.

For a network $\Sigma$, let $m^{\star}(\Sigma)$ denote the minimum number of observation nodes required for observability. The objective of this paper is to derive upper and lower bounds on $m^{\star}(\Sigma)$ for the classes of $K$-LT and $K$-ReLU networks. More precisely, a set of observation nodes makes the network observable if there exists a finite integer $N\geq 0$ such that any initial state $\xvec(0)$ can be uniquely determined from the output sequence $\yvec(0),\ldots,\yvec(N)$,
where $\yvec(t)=(y_1(t),\ldots,y_m(t))$.

\begin{remark}
The observability considered in this paper is global over the prescribed state space. This is because our aim is to determine bounds on the number of observation nodes that apply to all initial states, rather than only to states near a given state.
\end{remark}

For a prescribed class $\mathcal{C}$ of networks, we derive the following
four types of bounds on $m^{\star}(\Sigma)$:
\begin{description}
    \item[\textbf{General lower bound.}]
    A bound $b_{\mathrm{GL}}$ such that $m^{\star}(\Sigma)\geq b_{\mathrm{GL}}$
    for every $\Sigma\in\mathcal{C}$; equivalently, $b_{\mathrm{GL}}\leq\min_{\Sigma\in\mathcal C}m^{\star}(\Sigma)$.

    \item[\textbf{Best-case upper bound.}]
    A bound $b_{\mathrm{BU}}$ such that there exists a network
    $\Sigma\in\mathcal{C}$ satisfying $m^{\star}(\Sigma)\leq b_{\mathrm{BU}}$; equivalently, $\min_{\Sigma\in\mathcal C}m^{\star}(\Sigma)\leq b_{\mathrm{BU}}$.

    \item[\textbf{Worst-case lower bound.}]
    A bound $b_{\mathrm{WL}}$ such that there exists a network
    $\Sigma\in\mathcal{C}$ satisfying $m^{\star}(\Sigma)\geq b_{\mathrm{WL}}$; equivalently, $b_{\mathrm{WL}}\leq\max_{\Sigma\in\mathcal C}m^{\star}(\Sigma)$.

    \item[\textbf{General upper bound.}]
    A bound $b_{\mathrm{GU}}$ such that $m^{\star}(\Sigma)\leq b_{\mathrm{GU}}$
    for every $\Sigma\in\mathcal{C}$; equivalently, $\max_{\Sigma\in\mathcal C}m^{\star}(\Sigma)\leq b_{\mathrm{GU}}$.
\end{description}
Thus, we have
$b_{\mathrm{GL}}\leq\min_{\Sigma\in\mathcal{C}}m^{\star}(\Sigma)\leq b_{\mathrm{BU}}$
and
$b_{\mathrm{WL}}\leq\max_{\Sigma\in\mathcal{C}}m^{\star}(\Sigma)\leq b_{\mathrm{GU}}$.
The first pair gives the best-case bounds, while the second pair gives the
worst-case bounds. If $b_{\mathrm{GL}}=b_{\mathrm{BU}}$ or
$b_{\mathrm{WL}}=b_{\mathrm{GU}}$, then the corresponding pair of bounds
is said to be tight. These results also show
how the indegree $K$ and the activation function affect the number of
observation nodes needed for observability.

\section{Recurrent Neural Networks with Linear Threshold Functions}\label{sec:linear threshold functions}
In this section, we derive upper and lower bounds on the minimum number of observation nodes required for $K$-LT network observability. We begin with the following two propositions.
\begin{proposition}\label{prop:lt-worst-lb}
There exist a $K$-LT network for which all $n$ state nodes must be observed to ensure observability. Consequently, the worst-case minimum number of observation nodes is $n$.
\end{proposition}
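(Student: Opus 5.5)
We need a $K$-LT network in which every node must be observed. The natural candidate is one whose dynamics collapse all states after one step, so that the outputs after time $0$ carry no information about the initial state. Then only $\yvec(0)$ is informative, and it reveals exactly the observed coordinates.

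\textbf{Construction.} Fix $K\le n$. For every $i$, let node $i$ depend on the $K$ variables $x_1,\ldots,x_K$ (any $K$ variables would do) through
$$f_i(\xvec)=\Big[\sum_{j=1}^{K} x_j \ge K+1\Big].$$
Each $x_j$ is relevant in the usual sense of a $K$-LT network, since it carries a nonzero weight. If the paper instead requires functional relevance, in the Boolean sense, we replace this with a threshold function that is constant on reachable states but still genuinely depends on its inputs. The cleanest choice is a two-step variant. Let $f_i=x_1\wedge\cdots\wedge x_K$, an LT function with weights $1$ and threshold $K$, which is functionally relevant in all $K$ variables. Then all nodes are equal from $t=1$ on, and the trajectory after time $0$ depends on $\xvec(0)$ only through the single bit $x_1(0)\wedge\cdots\wedge x_K(0)$. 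That bit is not enough to recover the hidden coordinates.

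\textbf{Key argument.} Let $\phi$ be any observation map whose image $S$ misses some node $k$. We exhibit two distinct initial states with identical output sequences for every horizon $N$. Take $\xvec(0)=\mathbf{0}_n$ and $\xvec'(0)=\mathbf{e}_{n,k}$; these agree on $S$, so $\yvec(0)=\yvec'(0)$. Under the AND rule with $K\ge 2$, both states have at least one zero among $x_1,\ldots,x_K$. Hence both map to $\mathbf{0}_n$ at $t=1$, and $\mathbf{0}_n$ is a fixed point. The outputs therefore coincide for all $t$, so the network is unobservable unless $S=\llbracket 1,n\rrbracket$.

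The case $K=1$ needs separate care, because a single-input AND is the identity. For $K=1$ we use $f_i(\xvec)=[x_1\ge 1]$, so every node copies $x_1$ and $x_1$ is functionally relevant. The same pair $\mathbf{0}_n$, $\mathbf{e}_{n,k}$ works for $k\ne 1$. For $k=1$ this pair fails, since $x_1$ propagates to the other nodes; instead we use $f_i=x_{i}$ for all $i$, the identity network. In that network every state is a fixed point, so any unobserved node is invisible at all times. The identity network in fact settles every $K=1$ case uniformly, with the AND network handling $K\ge2$.

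\textbf{Conclusion and expected obstacle.} Observing all $n$ nodes trivially gives observability with $N=0$. Together with the argument above, this gives $m^\star(\Sigma)=n$ for the constructed network, and since $m^\star(\Sigma)\le n$ always, the worst-case minimum is exactly $n$. The main obstacle is matching the precise meaning of ``exactly $K$ relevant variables'' and covering all $K\le n$ uniformly. I would address it as above: the AND construction covers $K\ge 2$, the identity covers $K=1$, and I would check that distinct nodes may share the same input set.
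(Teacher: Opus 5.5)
Your proof is correct. It differs from the paper mainly in being self-contained. The paper does not build an example: it notes that AND (more generally nested canalyzing) functions are linear threshold functions \cite{jarrah2007}, so the worst-case $K$-AND Boolean network of Proposition~5 in \cite{sun24} is already a $K$-LT network, and the bound $n$ carries over.

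You instead construct such a network directly:
\begin{itemize}
\item for $K\ge 2$, every node computes $[x_1+\cdots+x_K\ge K]$;
\item for $K=1$, you use the identity network $[x_i\ge 1]$.
\end{itemize}
You then show that for any unobserved node $k$, the states $\mathbf{0}_n$ and $\mathbf{e}_{n,k}$ produce identical outputs at every time, which gives $m^\star(\Sigma)=n$.

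The underlying idea is the same one the citation relies on: a conjunctive network, written in threshold form, that forgets the initial state after one step. Your version is elementary, verifiable on the spot, and covers every $K\in\llbracket 1,n\rrbracket$ explicitly. The paper's version is shorter and places the result within the $K$-AND framework it later reuses for binary ReLU networks.

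One cleanup: drop the first candidate $[\sum_{j=1}^{K}x_j\ge K+1]$ rather than keeping it as an option. It is identically zero, so it has no relevant variables in the functional sense. The paper itself treats the identically-zero binary ReLU function as meaningless, and a nonzero weight does not make a variable relevant. Only your AND construction meets the ``exactly $K$ relevant variables'' requirement.
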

\begin{proposition}\label{prop:lt-best-ub}
For any $n\in\mathbb{Z}_{>0}$ and $K\in\llbracket 2,n\rrbracket$, there exists a $K$-LT network whose minimum number of observation nodes is $\lceil \frac{n}{K} \rceil$.
\end{proposition}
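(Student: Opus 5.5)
The plan is to prove two inequalities for one explicit network: observability with $\lceil n/K\rceil$ nodes, and a counting lower bound of $\lceil n/K\rceil$. The lower bound will be a pigeonhole argument on output trajectories. Suppose each node $i$ can take at most $2^{K}$ distinct output sequences $(x_i(0),x_i(1),\ldots)$ as $\xvec(0)$ ranges over $\{0,1\}^n$. Then $m$ observation nodes produce at most $2^{mK}$ distinct output sequences $\yvec(0),\ldots,\yvec(N)$, whatever the horizon $N$. Observability needs an injective map from the $2^n$ initial states, so $mK\geq n$, and since $m$ is an integer, $m\geq\lceil n/K\rceil$. The construction must therefore (a) give every node at most $2^K$ possible trajectories and (b) be observable from $\lceil n/K\rceil$ suitably chosen nodes.

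\textbf{Construction for $K\mid n$.} Partition the nodes into $n/K$ blocks $B=\{v_1,\ldots,v_K\}$, and make every update of a node in $B$ depend on exactly the $K$ nodes of $B$. Blocks are then closed under dependency, so each trajectory is determined by the $K$ initial bits of its block, which gives (a) immediately. For (b), I want linear-threshold maps on $\{0,1\}^K$ that depend on all $K$ inputs and make the block observable from $v_1$. The first choice is a dominated shift register:
\begin{itemize}
\item $x_{v_j}(t+1)=[\,2K\,x_{v_{j+1}}+\sum_{l\neq j+1}x_{v_l}\geq \theta_j\,]$ for $j<K$, with $\theta_j$ chosen so that the function is not exactly the copy of $x_{v_{j+1}}$ and all inputs are relevant;
\item a similar rule for $v_K$.
\end{itemize}
I would then check that the map $\xvec_B(0)\mapsto(x_{v_1}(0),\ldots,x_{v_1}(N))$ is injective. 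If the perturbed shift breaks injectivity, the fallback is a brute-force or inductive argument: exhibit, for each $K$, threshold weights making the block's state-transition map a permutation that is observable from $v_1$, for instance a cyclic counter. Observing $v_1$ in every block then gives $n/K$ observation nodes.

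\textbf{Case $K\nmid n$.} Write $n=qK+r$ with $0<r<K$. The remainder block $R$ of $r$ nodes cannot be closed, because each update needs $K$ relevant inputs. I would let the nodes of $R$ depend on all of $R$ together with $K-r$ nodes of one full block $B_1$. This preserves observability: observe one node of $R$, reconstruct $B_1$ from its own observer, and then decode $R$ given $B_1$. That gives $q+1=\lceil n/K\rceil$ nodes.

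\textbf{Main obstacle.} The hard step is property (a) for the nodes of $R$. Their trajectories a priori depend on $r+K>K$ initial bits, which would break the counting bound; for $n=K+1$, a single node of $R$ might even reveal everything. My first attempt is to make the $B_1$-inputs enter the $R$-updates only through a threshold that is saturated or determined by $R$'s own bits. Concretely, give the $R$-inputs dominant weights arranged so that the $B_1$-variables are relevant for the single update function, yet along actual trajectories they only affect outcomes in a way already determined by $R$'s trajectory. This would keep the number of distinct $R$-trajectories at most $2^{r}\cdot 2^{K-r}=2^K$. If this cannot be arranged, the fallback is to replace the blanket $2^K$-per-node bound by a joint bound. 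The point is that any set of $m$ observed nodes has trajectories determined by at most $mK$ initial coordinates in total; since $B_1$ would need its own observer anyway, attaching $R$'s extra dependence to $B_1$ adds no new free bits to that count.
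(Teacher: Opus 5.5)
\textbf{There is a genuine gap.} The paper gives no construction for this statement. It transfers Theorem~4 of \cite{sun24} to $K$-LT networks, using the fact that nested canalyzing functions are linear threshold functions. A self-contained construction would be a legitimate alternative, and your counting bound is correct whenever every block is closed. However, both load-bearing steps are left open.

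\textbf{The single-node block.} The whole case $K\mid n$ rests on a $K$-node $K$-LT network, with each update depending on all $K$ block variables, that is observable from one node. You never produce one, and your candidates fail. Nonnegative weights give a monotone map. For $K=2$ the only admissible positive-weight rules are $x_1\wedge x_2$ and $x_1\vee x_2$, and all four resulting networks have two states with identical $x_{v_1}$-trajectories; for instance, with two OR rules, $(1,0)$ and $(1,1)$ both yield $1,1,1,\ldots$. The permutation/counter fallback is also impossible at $K=2$. Each coordinate of a permutation of $\{0,1\}^2$ is balanced, and the only balanced two-input functions depending on both inputs are XOR and XNOR, which are not threshold functions. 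Gadgets that do work use negative weights:
$x_j(t+1)=[x_{j+1}(t)-x_j(t)\ge 1]$ for $K=2$, and $x_j(t+1)=[2x_{j+1}(t)-\sum_{l\ne j+1}x_l(t)\ge 1]$ for $K=3$, with cyclic indices and $x_1$ observed. For $K\ge 4$ this step is precisely Proposition~\ref{proposition3}, whose network fails at $K=3$. Since $K$ is unbounded, a ``brute-force'' search is not an argument; you need a uniform family.

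\textbf{The remainder block.} For $K\nmid n$ the lower bound is unproved, and your joint-count fallback is false. The blocks $B_2,\ldots,B_q$ and $R$ are read by no node outside themselves. Hence a $q$-element observation set can only succeed if it consists of one node $\rho\in R$ plus one node of each of $B_2,\ldots,B_q$. These trajectories depend on $(r+K)+(q-1)K=n$ initial bits, so counting yields no contradiction. Such a set works exactly when $R\cup B_1$ is observable from $\rho$ alone. Your sentence ``$B_1$ would need its own observer anyway'' simply assumes it is not. This is a real danger: Proposition~\ref{proposition3} and Theorem~\ref{prop} show that one threshold node's trajectory can encode far more than $K$ bits.

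What you must exhibit is a coupling with two properties. First, two initial states differing only on $B_1$ give identical trajectories on every node of $R$. Second, $R(0)$ remains recoverable from one $R$-node once the trajectory of $B_1$ is known; your upper bound needs this too. For $r=1$, the rule $x_\rho(t+1)=[x_\rho(t)+\sum_b x_b(t)\ge K]$, an AND with $K-1$ nodes $b$ of $B_1$, works. Here $\rho(0)$ is read at $t=0$, and $\rho(0)=0$ freezes $\rho$ at $0$ whatever $B_1(0)$ is. For $r\ge 2$ such a coupling still has to be designed and verified.
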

It is known that nested canalyzing functions, including AND and OR functions, can be represented as linear threshold functions \cite{jarrah2007}. Therefore, Propositions \ref{prop:lt-worst-lb} and \ref{prop:lt-best-ub} follow directly from Proposition 5 and Theorem 4 in \cite{sun24}, respectively.

Next, we investigate the best-case minimum number of observation nodes and show that $m=1$ for a broad class of networks, which is not mentioned in \cite{sun24}. First, for the case of $n=K=4$, following the ideas in Example 11 and Theorem 26 of \cite{guo22} but introducing some additional extensions, we consider the following 4-LT network:
\begin{eqnarray*}
x_1(t+1) & = & [x_1(t) - x_2(t) - x_3(t) + 2x_4(t) \geq 1],\\
x_2(t+1) & = & [2.5 x_1(t) + x_2(t) + x_3(t) + x_4(t) \geq 3],\\
x_3(t+1) & = & [x_1(t) + 2x_2(t) - x_3(t) - x_4(t) \geq 1],\\
x_4(t+1) & = & [x_1(t) - x_2(t) + 2x_3(t) - x_4(t) \geq 1].
\end{eqnarray*}

Note that $\xvec(t+1)$ is obtained by applying the right cyclic shift to $\xvec(t)$ except
for the cases of $\xvec(t)=(0,1,1,1)$ and $\xvec(t)=(1,0,0,0)$, and we have the following periodic attractors:
\begin{eqnarray*}
&&(0,0,0,0)\rightarrow(0,0,0,0),\\
&&(1,1,1,1)\rightarrow(1,1,1,1),\\
&&(0,0,1,1)\rightarrow(1,0,0,1)\rightarrow(1,1,0,0)\rightarrow(0,1,1,0)\rightarrow(0,0,1,1),\\
&&(0,1,0,1)\rightarrow(1,0,1,0)\rightarrow(0,1,0,1),\\
&&(0,1,0,0)\rightarrow(0,0,1,0)\rightarrow(0,0,0,1)\rightarrow(1,0,0,0)\rightarrow(1,0,1,1)\\
&&\rightarrow(1,1,0,1)\rightarrow(1,1,1,0)\rightarrow(0,1,1,1)\rightarrow(0,1,0,0).
\end{eqnarray*}
Let $x_4$ be selected as the observation node, so that
$y(t)=x_4(t)$. For any initial state, the corresponding output
sequence is given in Table \ref{tab2}. It follows that $\xvec(0)$ can be
uniquely determined from $y(0),y(1),\ldots,y(4)$.

\begin{table}[htbp]
\centering
\caption{Output sequence for an arbitrary initial state.}\label{tab2}
\begin{tabular}{cccc|ccccc}
\hline
$x_1(0)$ & $x_2(0)$ & $x_3(0)$ & $x_4(0)$ & $y(0)$ & $y(1)$ & $y(2)$ & $y(3)$ & $y(4)$ \\
\hline
0 & 0 & 0 & 0 & 0 & 0 & 0 & 0 & 0 \\
0 & 0 & 0 & 1 & 1 & 0 & 1 & 1 & 0 \\
0 & 0 & 1 & 0 & 0 & 1 & 0 & 1 & 1 \\
0 & 0 & 1 & 1 & 1 & 1 & 0 & 0 & 1 \\
0 & 1 & 0 & 0 & 0 & 0 & 1 & 0 & 1 \\
0 & 1 & 0 & 1 & 1 & 0 & 1 & 0 & 1 \\
0 & 1 & 1 & 0 & 0 & 1 & 1 & 0 & 0 \\
0 & 1 & 1 & 1 & 1 & 0 & 0 & 1 & 0 \\
1 & 0 & 0 & 0 & 0 & 1 & 1 & 0 & 1 \\
1 & 0 & 0 & 1 & 1 & 0 & 0 & 1 & 1 \\
1 & 0 & 1 & 0 & 0 & 1 & 0 & 1 & 0 \\
1 & 0 & 1 & 1 & 1 & 1 & 0 & 1 & 0 \\
1 & 1 & 0 & 0 & 0 & 0 & 1 & 1 & 0 \\
1 & 1 & 0 & 1 & 1 & 0 & 1 & 0 & 0 \\
1 & 1 & 1 & 0 & 0 & 1 & 0 & 0 & 1 \\
1 & 1 & 1 & 1 & 1 & 1 & 1 & 1 & 1 \\\hline
\end{tabular}
\end{table}

The above construction can be generalized for any $K=n>3$, and we have the following proposition.
\begin{proposition}\label{proposition3}
For any $K=n>3$, there exists a $K$-LT network whose minimum number of observation nodes required for observability is one.
\end{proposition}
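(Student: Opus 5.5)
The plan is to generalize the $n=K=4$ construction above: a perturbation of the right cyclic shift, observed at the single node $x_n$. The starting observation is that for the pure right cyclic shift $x_1(t+1)=x_n(t)$, $x_i(t+1)=x_{i-1}(t)$, observing $y(t)=x_n(t)$ gives $y(t)=x_{n-t}(0)$ for $t\in\llbracket 0,n-1\rrbracket$. So a single node already reveals the whole initial state in $n-1$ steps. The difficulty is that in a $K$-LT network with $K=n$, every update function must depend on all $n$ variables. We therefore want linear threshold functions $f_i$ that
\begin{itemize}
\item depend on every variable,
\item agree with $x_{i-1}$ (indices mod $n$) on all but a small set $S$ of exceptional states, and
\item keep the output map injective on initial states.
\end{itemize}

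\textbf{Step 1 (construction).} Mimicking the 4-node example, take each $f_i$ to have a dominant weight on $x_{i-1}$ and $\pm 1$ weights on the remaining variables. Choose the signs and thresholds so that $f_i(\xvec)\neq x_{i-1}$ only on the two states $\mathbf{e}_{n,1}=(1,0,\ldots,0)$ and $\bar{\mathbf{e}}_{n,1}=(0,1,\ldots,1)$, the analogues of $(1,0,0,0)$ and $(0,1,1,1)$. Once the exceptional set is fixed, the existence of suitable weights can be checked by a separating-hyperplane argument. The concrete choice will follow the pattern of the example:
\begin{itemize}
\item the weight $2$ is placed on the shifted variable and $-1$ on most of the others;
\item node $x_2$ is AND-like, with a large weight on $x_1$ and threshold $n-1$.
\end{itemize}
Each variable must be relevant; we verify this by exhibiting, for each pair $(i,j)$, two states differing only in $x_j$ on which $f_i$ differs. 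For $j=i-1$, two states off $S$ differing in $x_{i-1}$ suffice. For $j\neq i-1$, the flip must move the state into or out of $S$, which is exactly why $S$ must be nonempty and chosen with care. This step fixes the precise coefficients and thresholds.

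\textbf{Step 2 (dynamics).} Next we describe $F$ as in the example. Off $S$, $F$ is the cyclic shift. On $S$, it is redirected so that the two shift orbits through $\mathbf{e}_{n,1}$ and $\bar{\mathbf{e}}_{n,1}$ (the orbits of one-hot and one-cold vectors) are spliced into a single cycle of length $2n$. We should check that $F$ remains a permutation of $\{0,1\}^n$. For the 4-node example this holds: the one-hot and one-cold orbits merge into the 8-cycle displayed above.

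\textbf{Step 3 (observability).} Fix the horizon $N=n$ and compare the output of an initial state $\xvec(0)$ with the reading $x_n(0),x_{n-1}(0),\ldots,x_1(0)$ that the pure shift would produce.
\begin{itemize}
\item \emph{Trajectories avoiding $S$.} If the trajectory of $\xvec(0)$ avoids $S$ during the first $n-1$ steps, then $(y(0),\ldots,y(n-1))$ lists the coordinates of $\xvec(0)$ in reverse order, so $\xvec(0)$ is recovered directly.
\item \emph{Trajectories hitting $S$.} Otherwise $\xvec(0)$ lies on the spliced $2n$-cycle, i.e. it is a one-hot or one-cold vector. For these $2n$ states the $(n+1)$-length output windows are computed explicitly. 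We show they are pairwise distinct, and distinct from every window of a state off that cycle.
\end{itemize}
The argument for the second case is an occupation-count comparison. On the $2n$-cycle the windows contain exactly one $1$, or exactly one $0$, among $n$ consecutive outputs, except at a single detectable switch point whose position encodes the phase. A non-cycle state has at least two $0$s and at least two $1$s, and its length-$n$ window off $S$ reproduces all of its coordinates.

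\textbf{Main obstacle.} The hardest step is Step 1 combined with the second case of Step 3. The threshold weights must be chosen so that every variable is genuinely relevant, yet the deviation from the shift stays confined to $S$ and the modified cycle still yields pairwise distinct windows. If the two-state exceptional set cannot be realized by threshold functions for all $n>3$, the fallback is to enlarge $S$ to the whole one-hot and one-cold family. Every state of that family has at most one $1$ or at most one $0$, so their windows remain distinguishable from those of other states by this counting. The distinctness among the $2n$ family members would then be re-checked case by case.
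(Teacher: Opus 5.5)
Your architecture is the paper's: perturb the right cyclic shift only on $S=\{\mathbf{e}_{n,1},\bar{\mathbf{e}}_{n,1}\}$, splice the one-hot and one-cold orbits into a $2n$-cycle, and observe $x_n$ with horizon $n$. However, the two steps that carry the proof are not done, and as sketched both would fail.

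In Step 1 you only assert that suitable weights exist. The concrete pattern you read off the $n=4$ example (weight $2$ on $x_{i-1}$, $+1$ on $x_1$, $-1$ elsewhere, threshold $1$) is wrong for $n\ge 5$. For instance, $[x_1+2x_2-x_3-x_4-x_5\ge 1]$ evaluates to $0$ at $(0,1,1,1,0)\notin S$, whereas the shift requires $x_2=1$. The dominant weight has to grow with $n$. The paper takes $x_i(t+1)=[x_1+(n-2)x_{i-1}-\sum_{j\neq 1,i-1}x_j\ge 1]$ for $i\neq 2$, with indices mod $n$ so that $x_n$ is dominant for $i=1$. For $i=2$ it takes $x_2(t+1)=[(n-1.5)x_1+x_2+\cdots+x_n\ge n-1]$. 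A two-case check on the value of $x_{i-1}$ shows that each of these equals $x_{i-1}$ off $S$ and $1-x_{i-1}$ on $S$. So your fallback of enlarging $S$ is unnecessary, but an explicit verification of this kind is indispensable, since it is the whole existence claim.

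In Step 3, the cross-case comparison by occupation counts does not work. Windows on the $2n$-cycle need not contain exactly one $1$ or exactly one $0$. Also, $\mathbf{0}_n$ and $\mathbf{1}_n$ are off-cycle states that violate your ``at least two $0$s and at least two $1$s.'' More seriously, the first $n$ outputs of a cycle state can coincide with those of an off-cycle state. In the $n=4$ table, $\mathbf{e}_{4,3}=(0,0,1,0)$ and $(1,0,1,0)$ both give $y(0),\ldots,y(3)=0,1,0,1$, and they differ only in $y(4)$ ($1$ versus $0$). So no statistic of a length-$n$ window can separate the two classes.

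The missing idea is antiperiodicity. States $n$ steps apart on the $2n$-cycle are complements ($\mathbf{e}_{n,j}$ and $\bar{\mathbf{e}}_{n,j}$), so there $F^n(\xvec)=\mathbf{1}_n-\xvec$ and $y(n)=1-y(0)$. Off the cycle, $F^n(\xvec)=\xvec$, so $y(n)=y(0)$. This single test sorts the two classes. Off the cycle, $y(0),\ldots,y(n-1)$ is $x_n(0),\ldots,x_1(0)$, which determines the state.

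On the cycle you must still verify that the $2n$ explicit windows are pairwise distinct:
\begin{itemize}
\item for $\mathbf{e}_{n,j}$ with $j\ge 2$: $n-j$ zeros, then $1,0$, then $j-2$ ones;
\item for $\mathbf{e}_{n,1}$: $0$, then $n-2$ ones, then $0$;
\item for the one-cold states: the complements of these.
\end{itemize}
This is exactly where $n>3$ is used. For $n=3$ the same construction produces the cycle output $0,1,0,1,\ldots$, so $\mathbf{e}_{3,1}$ and $\mathbf{e}_{3,2}$ are indistinguishable.
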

\begin{proof}
For any $K=n>3$, consider the following $K$-LT network:
\begin{eqnarray*}
x_1(t+1) & = & [x_1(t) - x_2(t) - x_3(t) - \cdots - x_{n-1}(t) + (n-2)x_{n}(t) \geq 1],\\
x_2(t+1) & = & [(n-1.5) x_1(t) + x_2(t) + x_3(t) + \cdots +  x_{n}(t) \geq n-1],\\
x_3(t+1) & = & [x_1(t) + (n-2)x_2(t) - x_3(t) - x_4(t) - \cdots - x_{n-1}(t) - x_{n}(t) \geq 1],\\
x_4(t+1) & = & [x_1(t) - x_2(t) + (n-2) x_3(t) - x_4(t) - \cdots - x_{n-1}(t) - x_{n}(t) \geq 1],\\
&\vdots&\\
x_{n}(t+1) & = & [x_1(t) - x_2(t) - x_3(t) - x_4(t) - \cdots +(n-2) x_{n-1}(t) - x_{n}(t) \geq 1].
\end{eqnarray*}
Let $T_{n}$ denote the state transition map of this network. For any state $\xvec(t)=(x_{1}(t),\ldots,x_{n}(t))\in\{0,1\}^{n}$, one has
\begin{eqnarray*}
\xvec(t+1)=T_{n}(\xvec(t))=\left\{
\begin{array}{ll}
\mathbf{1}_{n}-P_{n}\xvec(t), & {\rm if}~\xvec(t)\in\{\mathbf{e}_{n,1},\bar{\mathbf{e}}_{n,1}\},\\
P_{n}\xvec(t), & {\rm if}~\xvec(t)\notin\{\mathbf{e}_{n,1},\bar{\mathbf{e}}_{n,1}\},
\end{array}
\right.
\end{eqnarray*}
where
$\mathbf{1}_n:=(1,1,\ldots,1),~\mathbf{e}_{n,1}:=(1,0,\ldots,0),~\bar{\mathbf{e}}_{n,1}:=\mathbf{1}_n-\mathbf{e}_{n,1}=(0,1,\ldots,1)$,
and $P_{n}$ is the permutation matrix associated with the right cyclic shift, namely,
\begin{eqnarray*}
P_{n}=
\begin{bmatrix}
0 & 0 &  \cdots & 0 & 1\\
1 & 0 &  \cdots & 0 & 0\\
0 & 1 &  \cdots & 0 & 0\\
\vdots & \vdots & \vdots & \ddots & \vdots\\
0 & 0 &  \cdots & 1 & 0
\end{bmatrix},
\end{eqnarray*}
so that $P_{n}\xvec=(x_{n},x_{1},\ldots,x_{n-1})$ and $(P_{n})^{n}=I_{n}$.

Then, the above transition map gives the following two kinds of periodic attractors.

\textbf{Case 1}: The periodic attractor containing both $\mathbf{e}_{n,1}$ and $\bar{\mathbf{e}}_{n,1}$ is
\begin{eqnarray*}
\mathbf{e}_{n,2}\rightarrow\mathbf{e}_{n,3}\rightarrow\cdots\rightarrow\mathbf{e}_{n,n}\rightarrow\mathbf{e}_{n,1}
\rightarrow\bar{\mathbf{e}}_{n,2}\rightarrow\cdots\rightarrow\bar{\mathbf{e}}_{n,n}\rightarrow\bar{\mathbf{e}}_{n,1}\rightarrow\mathbf{e}_{n,2},
\end{eqnarray*}
where $\mathbf{e}_{n,j}$ is the $n$-dimensional binary column vector whose
$j$-th component is $1$ and whose other components are $0$, and
$\bar{\mathbf{e}}_{n,j}$ denotes its complement, defined by
$\bar{\mathbf{e}}_{n,j}:=\mathbf{1}_n-\mathbf{e}_{n,j}$.

\textbf{Case 2}: For any state $\xvec=(x_{1},\ldots,x_{n})$ not included in the periodic attractor in Case 1, the trajectory
$\xvec, P_{n}\xvec, \ldots, (P_{n})^{n-1}\xvec$ forms a periodic attractor.

Assume that $x_n$ is selected as the observation node, so that $y(t)=x_n(t)$.
For any initial state, the corresponding output
sequence is given in Table \ref{tab3}.
\begin{table}[htbp]
\centering
\caption{Output sequence for an arbitrary initial state.}\label{tab3}
\begin{tabular}{c|c|c}
\hline
{\rm Initial state} $\xvec(0)$ & $y(0),y(1),\ldots,y(n-1)$ & $y(n)$ \\
\hline
$\mathbf{e}_{n,j},~2\le j\le n$ & $\underbrace{0,\ldots,0}_{n-j},1,0,\underbrace{1,\ldots,1}_{j-2}$ & $1-x_{n}(0)$\\
$\mathbf{e}_{n,1}$ & $0,\underbrace{1,\ldots,1}_{n-2},0$ & $1-x_{n}(0)$\\
$\bar{\mathbf{e}}_{n,j},~2\le j\le n$ & $\underbrace{1,\ldots,1}_{n-j},0,1,\underbrace{0,\ldots,0}_{j-2}$ & $1-x_{n}(0)$\\
$\bar{\mathbf{e}}_{n,1}$ & $1,\underbrace{0,\ldots,0}_{n-2},1$ & $1-x_{n}(0)$\\
$\xvec(0)$ {\rm in~Case~2} & $x_{n}(0),x_{n-1}(0),\ldots,x_{1}(0)$ & $x_{n}(0)$\\
\hline
\end{tabular}
\end{table}
Clearly, an initial state in Case 1 can be distinguished from one in
Case 2 by checking whether $y(0)=y(n)$. Within Case 1, only the following
two pairs require further consideration. For
$\mathbf{e}_{n,n}$ and $\bar{\mathbf{e}}_{n,n-1}$, the output sequences
over $t=0,\ldots,n-1$ are
\begin{eqnarray*}
1,0,\underbrace{1,\ldots,1}_{n-2}
\quad\text{and}\quad
1,0,1,\underbrace{0,\ldots,0}_{n-3},
\end{eqnarray*}
respectively. For
$\mathbf{e}_{n,n-1}$ and $\bar{\mathbf{e}}_{n,n}$, the corresponding
output sequences are
\begin{eqnarray*}
0,1,0,\underbrace{1,\ldots,1}_{n-3}
\quad\text{and}\quad
0,1,\underbrace{0,\ldots,0}_{n-2},
\end{eqnarray*}
respectively. Since $n>3$, the two output sequences in each pair are
distinct. Hence, all initial states in Case 1 are distinguishable.

Thus, $\xvec(0)$ can be uniquely determined from the output sequence $y(0),\ldots,y(n)$.
\end{proof}

Furthermore, it can be generalized to the case of $K=4$ and $n=3 \times K = 12$ as follows:
\begin{eqnarray*}
x_{1+4(j+1)}(t+1) & = & [x_{1+4j}(t) - x_{2+4j}(t) - x_{3+4j}(t) + 2x_{4+4j}(t) \geq 1],\\
x_{2+4(j+1)}(t+1) & = & [2.5 x_{1+4j}(t) + x_{2+4j}(t) + x_{3+4j}(t) + x_{4+4j}(t) \geq 3],\\
x_{3+4(j+1)}(t+1) & = & [x_{1+4j}(t) + 2x_{2+4j}(t) - x_{3+4j}(t) - x_{4+4j}(t) \geq 1],\\
x_{4+4(j+1)}(t+1) & = & [x_{1+4j}(t) - x_{2+4j}(t) + 2x_{3+4j}(t) - x_{4+4j}(t) \geq 1]
\end{eqnarray*}
for $j=0,1,2$, where all subscripts are taken modulo 12, with values in
$\llbracket 1,12\rrbracket$.
Let $x_{12}$ be selected as the observation node, so that
$y(t)=x_{12}(t)$. From the output sequence
$y(2),y(5),y(8),y(11),y(14)$, we can uniquely determine the first four components of
$\xvec(0)$.

For example, if the output sequence is
$y(2)=0,y(5)=1,y(8)=1,y(11)=0,y(14)=0$, then the first four components of the initial state are determined as
$\xvec(0)=(0,0,1,1,*,*,*,*,*,*,*,*)$,
where ``$*$'' denotes an unspecified Boolean component, which can be either 0 or 1 and is not used in this partial determination.

\begin{center}
\begin{tabular}{c|cccc|cccc|cccc}
\hline
$t$ & $x_{1}$ & $x_{2}$ & $x_{3}$ & $x_{4}$ & $x_{5}$ & $x_{6}$ & $x_{7}$ & $x_{8}$ & $x_{9}$ & $x_{10}$ & $x_{11}$ & $x_{12}$ \\
\hline
$0$ & 0 & 0 & 1 & 1 & $*$ & $*$ & $*$ & $*$ & $*$ & $*$ & $*$ & $*$ \\
$1$ & $*$ & $*$ & $*$ & $*$ & 1 & 0 & 0 & 1 & $*$ & $*$ & $*$ & $*$ \\
$2$ & $*$ & $*$ & $*$ & $*$ & $*$ & $*$ & $*$ & $*$ & 1 & 1 & 0 & 0 \\
$3$ & 0 & 1 & 1 & 0 & $*$ & $*$ & $*$ & $*$ & $*$ & $*$ & $*$ & $*$ \\
$4$ & $*$ & $*$ & $*$ & $*$ & 0 & 0 & 1 & 1 & $*$ & $*$ & $*$ & $*$ \\
$5$ & $*$ & $*$ & $*$ & $*$ & $*$ & $*$ & $*$ & $*$ & 1 & 0 & 0 & 1 \\
$\vdots$ & $\vdots$ & $\vdots$  & $\vdots$  & $\vdots$  &  $\vdots$ & $\vdots$  & $\vdots$  & $\vdots$  &  $\vdots$ & $\vdots$  & $\vdots$  &  $\vdots$ \\
\hline
\end{tabular}
\end{center}

The above construction can be generalized as follows.
\begin{theorem}\label{prop}
For any $K\geq4$ and $n=h \times K$, where $\gcd(h,K)=1$ and $h$ is odd, there exists a $K$-LT network whose minimum number of observation nodes required for observability is one.
\end{theorem}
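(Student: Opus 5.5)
The plan is to lift the single-block network of Proposition \ref{proposition3} to a cyclic cascade of $h$ blocks, exactly as in the $K=4$, $n=12$ example. Index the nodes as $x_{i+Kj}$ with $i\in\llbracket 1,K\rrbracket$ and $j\in\llbracket 0,h-1\rrbracket$, writing $B_j(t)=(x_{1+Kj}(t),\ldots,x_{K+Kj}(t))$ for block $j$. Define
\[
B_{j+1 \bmod h}(t+1)=T_K(B_j(t)),
\]
where $T_K$ is the $K$-LT map from the proof of Proposition \ref{proposition3} (this needs $K\ge 4$). Each node then depends on exactly the $K$ variables of the preceding block, so this is a $K$-LT network on $n=hK$ nodes. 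Observe the single node $y(t)=x_n(t)$, which is the last coordinate of block $h-1$. Since one observation node is trivially necessary, it suffices to show that $\xvec(0)$ is determined by $y(0),\ldots,y(N)$ for some finite $N$, for instance $N=hK+h-1$.

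\textbf{Step 1 (decoupling the blocks).} By induction, $B_{j+t \bmod h}(t)=T_K^{t}(B_j(0))$. Fix a block $j$ and put $r=(h-1-j)\bmod h$. At the times $t=r+hs$, $s\ge 0$, the observed block $h-1$ carries the $T_K$-orbit of $B_j(0)$. Hence
\[
z^{(j)}_s:=y(r+hs)=[T_K^{\,r+hs}(B_j(0))]_K .
\]
The different blocks are read off at disjoint residue classes of time modulo $h$. It therefore suffices to show that $B_j(0)$ is determined by $z^{(j)}_0,\ldots,z^{(j)}_K$. Since $T_K$ is a bijection (it permutes $\{0,1\}^K$ into the cycles listed in the proof of Proposition \ref{proposition3}), it is equivalent to recover $w:=T_K^{r}(B_j(0))$ from $u_s:=[T_K^{hs}w]_K$ for $s=0,\ldots,K$.

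\textbf{Step 2 (Case 1 versus Case 2).} If $w$ lies in a Case 2 orbit, then $T_K$ acts as $P_K$, which has period $K$, so $u_K=u_0$. If $w$ lies in the Case 1 cycle, that cycle has length $2K$ and $T_K^K$ acts on it as complementation: it maps $\mathbf{e}_{K,j}\mapsto\bar{\mathbf{e}}_{K,j}$ and conversely. Because $h$ is odd, $hK\equiv K \pmod{2K}$, so $u_K=1-u_0$. Checking whether $u_0=u_K$ therefore separates the two cases.

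\textbf{Step 3 (reconstruction within each case).} Since $\gcd(h,K)=1$, the residues $hs \bmod K$ for $s=0,\ldots,K-1$ run over all of $\llbracket 0,K-1\rrbracket$.
\begin{itemize}
\item In Case 2, $u_s=[P_K^{hs}w]_K$. As $s$ varies, these values read off every coordinate of $w$, each at a known position, so $w$ is determined.
\item In Case 1, let $c_q$ denote the observed bit $[T_K^{q}w]_K$, indexed by $q$ modulo $2K$. Complementation gives $c_{q+K}=1-c_q$. Each sample $u_s$ equals $c_{hs \bmod 2K}$, and since $hs \bmod 2K$ covers every residue class modulo $K$, the relation $c_{q+K}=1-c_q$ lets us recover the entire length-$2K$ periodic word $(c_0,\ldots,c_{2K-1})$.
\end{itemize}
In Case 1 it then remains to show that distinct positions on the $2K$-cycle give distinct rotations of this word. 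This is exactly what Table \ref{tab3} and the pairwise comparison in the proof of Proposition \ref{proposition3} establish: the $2K$ length-$K$ output windows are pairwise distinct, which uses $K>3$. Hence $w$, and with it $B_j(0)=T_K^{-r}(w)$, is determined. Doing this for every $j$ recovers $\xvec(0)$ from $y(0),\ldots,y(h-1+hK)$.

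The main obstacle is the Case 1 step. There the sampling stride is $h$ on a cycle of length $2K$ rather than $K$, and one must justify that the $K+1$ samples pin down the position on the cycle. The two hypotheses do this work: $h$ odd (so $hK\equiv K \bmod 2K$, giving the complement relation and the test in Step 2) and $\gcd(h,K)=1$ (so every residue modulo $K$ is hit). I would verify this bookkeeping carefully, and I would check directly that the rotations of the Case 1 output word are pairwise distinct, rather than relying only on the two comparisons quoted in Proposition \ref{proposition3}.
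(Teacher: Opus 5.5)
Your proposal is correct and takes essentially the same route as the paper. Both use the cyclic cascade of $h$ copies of $T_K$ observed at $x_{hK}$, decouple the blocks by time residues modulo $h$, separate the two cases by comparing the $(K{+}1)$-th sample with the first (using $hK\equiv K \pmod{2K}$ for odd $h$), use $\gcd(h,K)=1$ to hit every residue modulo $K$, and work over the same horizon $(K+1)h-1$. Your normalization $w=T_K^{r}(B_j(0))$ via bijectivity of $T_K$ makes explicit the ``other blocks'' step that the paper only sketches. Your direct check that the Case 1 output word $0^{K-2}\,1\,0\,1^{K-2}\,0\,1$ has pairwise distinct rotations does the same for the within-Case-1 distinctness; this holds for $K\ge 4$ because the word has a unique zero-run of length $K-2\ge 2$.
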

\begin{proof}
Consider the following $K$-LT network:
\begin{eqnarray*}
x_{1+K(j+1)}(t+1) & = & [x_{1+Kj}(t) - x_{2+Kj}(t) - \cdots - x_{(K-1)+Kj}(t) + (K-2)x_{K+Kj}(t) \geq 1],\\
x_{2+K(j+1)}(t+1) & = & [(K-1.5) x_{1+Kj}(t) + x_{2+Kj}(t) + \cdots  + x_{K+Kj}(t) \geq K-1],\\
x_{3+K(j+1)}(t+1) & = & [x_{1+Kj}(t) + (K-2)x_{2+Kj}(t)  - \cdots - x_{(K-1)+Kj}(t) - x_{K+Kj}(t) \geq 1],\\
&\vdots&\\
x_{K+K(j+1)}(t+1) & = & [x_{1+Kj}(t) - x_{2+Kj}(t) - \cdots +(K-2) x_{(K-1)+Kj}(t) - x_{K+Kj}(t) \geq 1]
\end{eqnarray*}
for $j\in\llbracket 0,h-1\rrbracket$, where all subscripts are taken modulo $n$, with values in
$\llbracket 1,n\rrbracket$. Here, $\gcd(h,K)=1$ and $h$ is odd.

Partition the state vector into $h$ blocks of length $K$:
\begin{eqnarray*}
\xvec(t)=(\xvec^{(1)}(t),\xvec^{(2)}(t),\ldots,\xvec^{(h)}(t)),
\end{eqnarray*}
where $\xvec^{(j)}(t):=(x_{(j-1)K+1}(t),\ldots,x_{jK}(t))\in\{0,1\}^K,~j\in\llbracket 1,h\rrbracket$.
By the construction of the network, for $j\in\llbracket 1,h\rrbracket$,
\begin{eqnarray*}
\xvec^{(j)}(t+1)=T_{K}(\xvec^{(j-1)}(t))=\left\{
\begin{array}{ll}
\mathbf{1}_{K}-P_{K}\xvec^{(j-1)}(t), & {\rm if}~\xvec^{(j-1)}(t))\in\{\mathbf{e}_{K,1},\bar{\mathbf{e}}_{K,1}\},\\
P_{K}\xvec^{(j-1)}(t), & {\rm if}~\xvec^{(j-1)}(t)\notin\{\mathbf{e}_{K,1},\bar{\mathbf{e}}_{K,1}\},
\end{array}
\right.
\end{eqnarray*}
where $\xvec^{(0)}(t):=\xvec^{(h)}(t)$.

Assume that $x_{hK}$ is selected as the observation node. The state information of the first block, $\xvec^{(1)}(0)=(x_1(0),x_2(0),\ldots,x_K(0))$,
is propagated to the $h$-th block at the time instants
$t=lh-1,~l\in\mathbb Z_{>0}$.
More precisely, $\xvec^{(h)}(lh-1)=T_K^{lh-1}(\xvec^{(1)}(0))$.
Let $z_l:=x_{hK}(lh-1),~l\in\llbracket1,K+1\rrbracket$.
We show that the initial state $\xvec^{(1)}(0)$ can be uniquely determined from the output sequence
$z_1,z_2,\ldots,z_{K+1}$.

Hereafter, $y(0),y(1),\ldots,y(K)$ refer to the output sequence of the $K$-LT network considered in Proposition~\ref{proposition3}.

We first consider the special case $h=1$. In this case, $z_l=x_K(l-1),~l\in\llbracket1,K+1\rrbracket$.
It follows from Table~\ref{tab3} that the output sequence
$y(0),y(1),\ldots,y(K-1)$
uniquely determines the initial state within each of the two cases in Proposition~\ref{proposition3}. Moreover, the additional output $y(K)$ distinguishes the two cases. Therefore, $\xvec^{(1)}(0)$ is uniquely determined by
$z_1,z_2,\ldots,z_{K+1}$ when $h=1$.

We next consider the case $h>1$. Since $\gcd(h,K)=1$, the residues
$r_l:=(lh-1)\bmod K,~l\in\llbracket1,K\rrbracket$,
run through all the elements of $\llbracket0,K-1\rrbracket$. For each
$l\in\llbracket1,K\rrbracket$, we have
$(lh-1)\bmod 2K=r_l+\varepsilon_lK,~\varepsilon_l\in\{0,1\}$,

Suppose that $\xvec^{(1)}(0)$ belongs to the periodic attractor described in Case~1 of Proposition~\ref{proposition3}. Note that in this case the output sequence $y(t)$ in Proposition~\ref{proposition3} satisfies
$y(t+K)=1-y(t)$.
Hence,
\begin{eqnarray*}
z_l=\left\{
\begin{array}{ll}
y(r_l), & \varepsilon_l=0,\\
1-y(r_l), & \varepsilon_l=1.
\end{array}
\right.
\end{eqnarray*}
Since $r_1,\ldots,r_K$ form a permutation of
$\llbracket0,K-1\rrbracket$, the output sequence
$z_1,\ldots,z_K$ uniquely determine the output values
$y(0),y(1),\ldots,y(K-1)$
listed in Table~\ref{tab3}. According to Table~\ref{tab3}, these $K$ output values uniquely determine the initial state among all states in Case~1 of Proposition~\ref{proposition3}. Therefore, $z_1,\ldots,z_K$ uniquely determine $\xvec^{(1)}(0)$ within Case~1 of Proposition~\ref{proposition3}.

Suppose that $\xvec^{(1)}(0)$ belongs to the periodic attractor described in Case~2 of Proposition~\ref{proposition3}.
Then
\begin{eqnarray*}
\left[T_K^s\bigl(\xvec^{(1)}(0)\bigr)\right]_K=x_{K-(s\bmod K)}(0),\quad s\in\mathbb Z_{\geq0},
\end{eqnarray*}
where $[\cdot]_{K}$ denotes the $K$-th component of a $K$-dimensional vector.
Consequently,
\begin{eqnarray*}
z_l=x_{K-((lh-1)\bmod K)}(0),\quad l\in\mathbb Z_{>0}.
\end{eqnarray*}
Here, $K-r_l,~l\in\llbracket1,K\rrbracket$,
run through all the elements of $\llbracket1,K\rrbracket$.
Since $z_l=x_{K-r_l}(0)$, the output sequence $z_1,\ldots,z_K$ contain every component of $\xvec^{(1)}(0)=(x_1(0),x_2(0),\ldots,x_K(0))$
exactly once. Moreover, the order of these components is determined by $h$. Therefore, $z_1,\ldots,z_K$ uniquely determine $\xvec^{(1)}(0)$ within Case~2 of Proposition~\ref{proposition3}.

In summary, within each case, the output sequence $z_1,\ldots,z_K$
is related to the corresponding output sequence in Table~\ref{tab3} of Proposition~\ref{proposition3} by a bijective transformation determined by $h$. Thus, $z_1,\ldots,z_K$ uniquely determine the initial state $\xvec^{(1)}(0)$ within each case.

It remains to distinguish the two cases. Since $h$ is odd, then $Kh\equiv K\pmod{2K}$, and hence $(K+1)h-1\equiv h-1+K \pmod{2K}$.
Therefore,
\begin{eqnarray*}
z_{K+1}=\left\{
\begin{array}{ll}
1-z_{1}, & {\rm if}~\xvec^{(1)}(0) {\rm~belongs~to~Case~1},\\
z_{1}, & {\rm if}~\xvec^{(1)}(0) {\rm~belongs~to~Case~2}.
\end{array}
\right.
\end{eqnarray*}
Thus, the two cases can be uniquely distinguished by comparing $z_{K+1}$ with $z_1$.
Hence, the output sequence $z_1,\ldots,z_{K+1}$ uniquely determines the initial state $\xvec^{(1)}(0)$.

The same argument applies to the other blocks. More precisely, for $j\in\llbracket 1,h\rrbracket$, the block $\xvec^{(j)}(0)$ reaches the observation node at the time instants $t=lh-j,~l\in\mathbb Z_{>0}$.
The shift by $j$ does not affect the above residue arguments. Hence, each
initial block $\xvec^{(j)}(0)$ can be uniquely determined from the corresponding
subsequence of the output. Therefore, the entire initial state $\xvec(0)$ can
be uniquely determined from $x_{Kh}(t),~t\in\llbracket 0,(K+1)h-1\rrbracket$.
\end{proof}

\begin{remark}
For the above Proposition \ref{prop}, if $h$ is even, the initial states in the two cases cannot always be distinguished. For example, let $K=7$ and $h=2$. The two initial blocks $\xvec^{(1)}(0)=(0,1,0,0,0,0,0)$ and $\widetilde{\xvec}^{(1)}(0)=(1,1,1,0,1,0,1)$,
which belong to Case~1 and Case~2, respectively, generate identical output sequences $0,0,1,1,1,1,1,0,0,1,1,1,1,1,\ldots$.
\end{remark}
\section{Recurrent Neural Networks with ReLU Functions}\label{sec:relu functions}
In this section, we consider three subtypes of ReLU networks.

\textbf{Binary ReLU network}: The state of each node is restricted to binary values, i.e., $\xvec(t)\in\{0,1\}^n$.

\textbf{Positive ReLU network}: The state remains in the nonnegative orthant, i.e., $\xvec(t)\in\mathbb{R}_{\geq0}^n$.

\textbf{General ReLU network}: The state can take any value in $\mathbb{R}^n$.
\subsection{Binary ReLU Networks}
Consider an AND function $f(x_1,x_2,\ldots,x_K)=l_1 \land l_2 \land \cdots \land l_K$,
where each $l_i$ is a literal, i.e., either $x_i$ or $\lnon{x_i}$. Define
\begin{eqnarray*}
lit(l_i) & = & \left\{
\begin{array}{ll}
x_i & \mbox{if $l_i=x_i$},\\
1-x_i & \mbox{if $l_i=\lnon{x_i}$}.
\end{array}
\right.
\end{eqnarray*}
Then
\begin{eqnarray*}
f(x_1,x_2,\ldots,x_K) & = & \max\left(\sum_{i=1}^{K} lit(l_i) - K + 1, 0\right).
\end{eqnarray*}
Therefore, any $K$-AND function consisting of $K$ literals can be represented as a binary $K$-ReLU function.
Furthermore, we can prove that the converse of this property also holds.
\begin{proposition}
Every $K$-AND BN admits an equivalent representation as a binary $K$-ReLU network, and conversely.
\end{proposition}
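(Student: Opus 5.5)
The forward direction is already in hand: each $K$-AND update $l_1\land\cdots\land l_K$ equals $\max(\sum_i lit(l_i)-K+1,0)$. Expanding $lit(l_i)$ gives an affine form $\avec_i\cdot\xvec+b_i$ supported on the same $K$ variables, with coefficients $\pm1$, so every node of a $K$-AND BN becomes a $K$-ReLU node. On $\{0,1\}^n$ this node always outputs a value in $\{0,1\}$. Hence the binary domain is invariant, and the two networks have identical transition maps on $\{0,1\}^n$, which is all that observability (for any choice of $\phi$) depends on.

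For the converse, I take a binary $K$-ReLU network: each $f_i(\xvec)=\max(\avec_i\cdot\xvec+b_i,0)$ depends on exactly $K$ variables, and by definition of the class $f_i$ maps $\{0,1\}^n$ into $\{0,1\}$. I view $f_i$ as a Boolean function $g$ of its $K$ relevant inputs $x_{j_1},\ldots,x_{j_K}$, and aim to show that $g$ is an AND of $K$ literals. First I substitute $u_k=x_{j_k}$ when $a_{i,j_k}>0$ and $u_k=1-x_{j_k}$ when $a_{i,j_k}<0$. This rewrites the pre-activation as $\sum_k c_k u_k+b'$ with every $c_k>0$; coefficients cannot be zero, since each variable is relevant. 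The map $\uvec\mapsto\sum_k c_ku_k+b'$ is then strictly increasing coordinatewise on $\{0,1\}^K$, so $h(\uvec)=\max(\sum_k c_ku_k+b',0)$ is monotone nondecreasing.

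The key step, which I expect to be the main obstacle, is to show $h=u_1\land\cdots\land u_K$. I plan to argue via the unique maximizer $\uvec=\mathbf{1}_K$. Suppose $h(\uvec)=1$ for some $\uvec$ with $u_k=0$. Then $h(\uvec+\mathbf{e}_{K,k})$ has pre-activation at least $1+c_k>1$, giving an output $>1$ and contradicting binarity. (If $\uvec$ is not a strict point, I take $k$ with $u_k=0$ directly.) So $h(\uvec)=1$ only at $\uvec=\mathbf{1}_K$. Now $h$ cannot be identically $0$: if it were, no variable would be relevant, contradicting $K\geq1$ relevant inputs. Because the only possible $1$ is at $\mathbf{1}_K$, $h$ must be $1$ there and $0$ elsewhere. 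Hence $h=u_1\land\cdots\land u_K$, and $f_i$ is the AND of the literals $x_{j_k}$ or $\lnon{x_{j_k}}$ according to the sign of $a_{i,j_k}$.

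Finally, I assemble the node-wise identities. The two networks have the same transition map on $\{0,1\}^n$, so for every observation map $\phi$ they produce identical output sequences from every initial state. Observability and $m^{\star}$ therefore coincide, which is the sense of "equivalent representation" used to transfer the $K$-AND bounds of \cite{sun24}. A subtlety I would check is the convention that "relevant" means essential as a function on the binary domain. If it means only that the coefficient is nonzero, the argument above still yields the result, because the contradiction step uses only $c_k>0$.
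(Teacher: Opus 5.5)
Your argument is correct and follows essentially the same route as the paper: both flip signs via literals so the pre-activation becomes $\sum_k c_k u_k+b'$ with all $c_k>0$, then show that only the unique maximizer $\mathbf{1}_K$ can produce output $1$ (the paper reads this off the maximum value $a_1+\cdots+a_H+b$, you via the one-step increment $\uvec\mapsto\uvec+\mathbf{e}_{K,k}$), and both set aside the identically-zero case. One correction to your closing remark: under a mere nonzero-coefficient convention the contradiction step does not exclude that case, e.g.\ $\max(x_1+x_2-5,0)\equiv 0$ on $\{0,1\}^2$ is a binary $2$-ReLU node that is no AND of two literals, so you genuinely need essential relevance there; this is exactly the case the paper dismisses as ``meaningless.''
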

\begin{proof}
The forward implication has already been established. It remains to prove the converse.

Suppose that exactly $H$ of the $K$ coefficients in $f(x_1,\ldots,x_K)$ are positive, where $H\in\llbracket 0,K\rrbracket$. Without loss of generality, assume that the coefficients of $x_1,\ldots,x_H$ are positive and those of
$x_{H+1},\ldots,x_K$ are negative, and let
$f(x_1,\ldots,x_K)=\max(a_{1}x_{1}+\cdots+a_{H}x_{H}-a_{H+1}x_{H+1}-\cdots-a_{K}x_{K}+b,0)$,
where $a_{1},\ldots,a_{K}>0$. We note that $a_{1}x_{1}+\cdots+a_{H}x_{H}-a_{H+1}x_{H+1}-\cdots-a_{K}x_{K}+b$ has a unique maximum value $a_{1}+\cdots+a_{H}+b$ when $x_1=\cdots=x_H=1$ and $x_{H+1}=\cdots=x_K=0$. Since $f(x_1,\ldots,x_K)$ is a binary $K$-ReLU function, $a_{1}+\cdots+a_{H}+b$ can take the value 1 or 0.

If $a_{1}+\cdots+a_{H}+b=1$, which means that
\begin{eqnarray*}
&&f(x_1,\ldots,x_K)= \left\{
\begin{array}{ll}
1 & \mbox{if $x_1=\cdots=x_H=1,x_{H+1}=\cdots=x_K=0$},\\
0 & \mbox{otherwise},
\end{array}
\right.
\end{eqnarray*}
then this binary $K$-ReLU function $f(x_1,\ldots,x_K)$ is equivalent to the following $K$-AND function
\begin{eqnarray*}
x_1 \land \cdots \land x_H \land \lnon{x_{H+1}} \land \cdots \land \lnon{x_{K}}.
\end{eqnarray*}

If $a_{1}+\cdots+a_{H}+b=0$, which means that $f(x_1,\ldots,x_K)$ is identically zero,
then this binary $K$-ReLU function $f(x_1,\ldots,x_K)$ becomes meaningless in this case.
\end{proof}

Therefore, by the equivalence between $K$-AND BNs whose update functions are conjunctions of $K$ literals and the corresponding binary $K$-ReLU networks, the results established in Proposition 5 and Theorems 2,3 of \cite{sun24} for $K$-AND BNs also hold for the corresponding binary $K$-ReLU networks.

Specifically, for any binary $K$-ReLU network with $n$ nodes, the general lower bound, the best case upper bound, and the worst case lower bound for the minimum number of observation nodes are $\left[(1-K)+\frac{2^{K}-1}{2^{K}}\log_{2}(2^{K}-1)\right]n$, $\left(\frac{2^{K}-K-1}{2^{K}-1}\right)n$, and $n$, respectively.
\subsection{Positive ReLU Networks}
We next consider positive ReLU networks. Under the nonnegativity assumptions introduced below, the input to the ReLU function remains nonnegative. Hence, the positive ReLU network reduces to a positive affine system. Its observability can therefore be studied using standard linear-system theory. This case serves as a reference for the analysis of general ReLU networks.

\begin{proposition}
Consider the positive ReLU network
\begin{eqnarray*}
\xvec(t+1) & = & \max\left(A\xvec(t)+\bvec,\mathbf{0}_{n}\right),\\
\yvec(t) & = & C\xvec(t),
\end{eqnarray*}
where $\xvec(0)\in\mathbb{R}_{\geq0}^n$, $A\in\mathbb{R}_{\geq0}^{n\times n}$, $\bvec\geq\mathbb{R}_{\geq0}^{n}$, $\yvec(t)\in\mathbb{R}^p$, $C\in\mathbb{R}^{p\times n}$, and the maximum is taken componentwise. Then the initial state $\xvec(0)$ can be uniquely determined from the output sequence $\yvec(0),\yvec(1),\ldots,\yvec(n-1)$ if and only if
\begin{eqnarray*}
\rank
\begin{bmatrix}
C \\
CA \\
\vdots\\
CA^{n-1}
\end{bmatrix}
=n.
\end{eqnarray*}
\end{proposition}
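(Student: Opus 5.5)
The plan is to show that the ReLU never clips along any trajectory, so the network coincides with the affine system $\xvec(t+1)=A\xvec(t)+\bvec$, and then to apply the classical Kalman argument with the known drift term $\bvec$ subtracted off.

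\textbf{Step 1 (invariance of the orthant).} I will show by induction on $t$ that $\xvec(t)\in\mathbb{R}_{\geq0}^n$ and $A\xvec(t)+\bvec\geq\mathbf{0}_n$ for all $t\geq0$. The base case holds because $\xvec(0)\geq0$. For the inductive step, if $\xvec(t)\geq0$, then $A\xvec(t)\geq0$ since $A$ has nonnegative entries. Because $\bvec\geq0$, it follows that $A\xvec(t)+\bvec\geq0$. The componentwise maximum with $\mathbf{0}_n$ then acts as the identity, so
\[
\xvec(t+1)=A\xvec(t)+\bvec\geq0.
\]
(I read the hypothesis ``$\bvec\geq\mathbb{R}_{\geq0}^n$'' as $\bvec\in\mathbb{R}_{\geq0}^n$.)

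\textbf{Step 2 (explicit trajectory).} Iterating the affine recursion gives
\[
\xvec(t)=A^t\xvec(0)+\sum_{s=0}^{t-1}A^s\bvec .
\]
Hence
\[
\yvec(t)-C\sum_{s=0}^{t-1}A^s\bvec=CA^t\xvec(0).
\]
The left-hand side is computable from the output and the known parameters $A,\bvec,C$. Stacking these identities for $t=0,\ldots,n-1$ yields $\mathcal{O}\xvec(0)=\tilde{\yvec}$, where $\mathcal{O}$ is the observability matrix in the statement and $\tilde{\yvec}$ is the corrected output vector.

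\textbf{Step 3 (equivalence).} Sufficiency is immediate. If $\rank\mathcal{O}=n$, then $\mathcal{O}$ has a left inverse, so $\xvec(0)=(\mathcal{O}^{\mathsf T}\mathcal{O})^{-1}\mathcal{O}^{\mathsf T}\tilde{\yvec}$ is uniquely determined.

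For necessity, suppose $\rank\mathcal{O}<n$. Pick a nonzero $\vvec\in\ker\mathcal{O}$, and take $\xvec(0)$ in the interior of the orthant, for instance $\xvec(0)=\mathbf{1}_n$. For small $\varepsilon>0$, the state $\xvec(0)+\varepsilon\vvec$ is still nonnegative. By Step 1, both initial states follow the affine dynamics, and by Step 2 their output sequences over $t=0,\ldots,n-1$ differ by $\varepsilon\mathcal{O}\vvec=0$. So two distinct admissible initial states give identical outputs, and $\xvec(0)$ is not uniquely determined.

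\textbf{Main obstacle.} The one delicate point is necessity. Admissibility is restricted to the nonnegative orthant, so a kernel direction alone does not give two valid initial states. Perturbing around an interior point resolves this. All that is needed is that $\mathbb{R}^n_{\geq0}$ has nonempty interior, so that admissible states are not confined to a lower-dimensional set where a rank-deficient $\mathcal{O}$ might still be injective.
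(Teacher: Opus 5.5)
Your proof is correct and follows the paper's route. You show that the ReLU never clips, so the network is the affine system $\mathbf{x}(t+1)=A\mathbf{x}(t)+\mathbf{b}$; you then subtract the known drift $\sum_{s<t}CA^s\mathbf{b}$ and apply the Kalman rank condition. Your explicit induction for orthant invariance and your interior-point perturbation for necessity make rigorous two steps the paper only asserts. In particular, the perturbation shows that a kernel direction of the observability matrix yields two distinct \emph{admissible} (nonnegative) initial states with identical outputs.
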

\begin{proof}
Since $\xvec(0)\in\mathbb{R}_{\geq0}^n$, $A\geq\mathbf{0}_{n\times n}$, and $\bvec\geq \mathbf{0}_{n}$, it follows
that $A\xvec(t)+\bvec\geq \mathbf{0}_{n}$. Therefore, the positive ReLU network
reduces to the positive linear system $\xvec(t+1)=A\xvec(t)+\bvec$.
It follows that $\xvec(t)=A^{t}\xvec(0)+\sum_{j=0}^{t-1}A^j\bvec$,
and hence $\yvec(t)=CA^{t}\xvec(0)+\sum_{j=0}^{t-1}CA^j\bvec$.
The second term is completely determined by $A$, $\bvec$, and $C$. Therefore, the
initial state can be uniquely determined from the output sequence if and only if
the classical observability matrix of the pair $(C,A)$ has full column rank.
\end{proof}

\begin{remark}
For a positive ReLU network, the initial state $\xvec(0)$ can be uniquely determined by observing a single node $x_i(t),~i\in\llbracket1,n\rrbracket$, if and only if
$\rank\begin{bmatrix}C_{i} \\C_{i}A \\\vdots\\C_{i}A^{n-1}\end{bmatrix}=n$,
where $C_{i}=\mathbf{e}_{n,i}^{\mathsf T}$.
\end{remark}
\subsection{General ReLU Networks}
We first consider the following ReLU network with $K=1$ and $n=2^h-1$, where $h\geq2$:
\begin{eqnarray*}
x_1(t+1) & = & \max(x_2(t),0),\\
x_{2i}(t+1) & = & \max(x_i(t),0),\\
x_{2i+1}(t+1) & = & \max(-x_i(t),0),\quad i\in\llbracket1,2^{h-1}-1\rrbracket.
\end{eqnarray*}
Let $x_{2j}(t)$ and $x_{2j+1}(t)$, where $j\in\llbracket2^{h-2},2^{h-1}-1\rrbracket$, be selected as the observation nodes.

At $t=0$, $x_i(0),~i\in\llbracket2^{h-1},2^{h}-1\rrbracket$ are obtained directly from the observation nodes.

At $t=1$, $x_i(0),~i\in\llbracket2^{h-2},2^{h-1}-1\rrbracket$ are recovered as $x_i(0)=x_{2i}(1)-x_{2i+1}(1)$.

At $t=2$, $x_i(0),~i\in\llbracket2^{h-3},2^{h-2}-1\rrbracket$ are recovered as
$x_i(0)=x_{2i}(1)-x_{2i+1}(1)=(x_{4i}(2)-x_{4i+1}(2))-(x_{4i+2}(2)-x_{4i+3}(2))=x_{4i}(2)-x_{4i+2}(2)$, where $x_{4i+1}(2)=x_{4i+3}(2)=0$.

More generally, for each $t\in\llbracket1,h-1\rrbracket$, $x_i(0),~i\in\llbracket2^{h-1-t},2^{h-t}-1\rrbracket$ are recovered according to
\begin{eqnarray*}
x_i(0)=x_{2^t i}(t)-x_{2^t i+2^{t-1}}(t).
\end{eqnarray*}
Therefore, initial state $\xvec(0)$ can be uniquely recovered from the output sequence
$x_{2j}(t),x_{2j+1}(t),~j\in\llbracket2^{h-2},2^{h-1}-1\rrbracket$ over
$t\in\llbracket0,h-1\rrbracket$. In this case, $\frac{n+1}{2}$ observation nodes are sufficient to recover the entire initial state.

Next, we consider the following ReLU network with $K=2$ and $n=4h$, where $h\geq1$:
\begin{eqnarray*}
x_{4i+1}(t+1) & = & \max(x_{4i+1}(t)+x_{4i+2}(t),0),\\
x_{4i+2}(t+1) & = & \max(-x_{4i+1}(t)-x_{4i+2}(t),0),\\
x_{4i+3}(t+1) & = & \max(x_{4i+1}(t)-x_{4i+2}(t),0),\\
x_{4i+4}(t+1) & = & \max(-x_{4i+1}(t)+x_{4i+2}(t),0),\quad i\in\llbracket0,h-1\rrbracket.
\end{eqnarray*}
Let $x_{4i+3}(t)$ and $x_{4i+4}(t)$, where $i\in\llbracket0,h-1\rrbracket$, be selected as the observation nodes.

At $t=0$, $x_{4i+3}(0),x_{4i+4}(0),~i\in\llbracket0,h-1\rrbracket$ are obtained directly from the observation nodes.

At $t=1$, one has $x_{4i+3}(1)-x_{4i+4}(1)=x_{4i+1}(0)-x_{4i+2}(0)$.

At $t=2$, one has $x_{4i+3}(2)-x_{4i+4}(2)=x_{4i+1}(1)-x_{4i+2}(1)=x_{4i+1}(0)+x_{4i+2}(0)$.

Thus, initial states $x_{4i+1}(0),x_{4i+2}(0),~i\in\llbracket0,h-1\rrbracket$ are recovered as $x_{4i+1}(0)=\frac{1}{2}(x_{4i+3}(2)-x_{4i+4}(2)+x_{4i+3}(1)-x_{4i+4}(1))$ and $x_{4i+2}(0)=\frac{1}{2}(x_{4i+3}(2)-x_{4i+4}(2)-x_{4i+3}(1)+x_{4i+4}(1))$.

Therefore, initial state $\xvec(0)$ can be uniquely recovered from the output sequence $x_{4i+3}(t),x_{4i+4}(t),~i\in\llbracket0,h-1\rrbracket$ over
$t=0,1,2$. In this case, $\frac{n}{2}$ observation nodes are sufficient to recover the entire initial state.

We next consider the case $K=3$ and establish the following proposition.
\begin{proposition}\label{proposition4}
For $n=6$, there exists a general 3-ReLU network that is observable from $m=3$ observation nodes.
\end{proposition}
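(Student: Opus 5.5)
The plan is to generalize the $K=2$ construction preceding Proposition~\ref{proposition4}. We take three ``hidden'' nodes $x_1,x_2,x_3$ and three observed nodes $x_4,x_5,x_6$, and we build the update rules from \emph{sign-split pairs}
\[
x_p(t+1)=\max(\ell(\xvec(t)),0),\qquad x_q(t+1)=\max(-\ell(\xvec(t)),0),
\]
for linear forms $\ell$ in exactly three variables. The identity $[z]_+-[-z]_+=z$ then turns each pair back into a linear functional of the previous state. In this way the ReLU never destroys information along a pair, which is exactly the rank loss we must avoid. Each update depends on exactly three state variables, so the network is a general $3$-ReLU network, and the output is $\yvec(t)=(x_4(t),x_5(t),x_6(t))$.

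\emph{Concrete choice I will try first.} Let $s=x_1+x_2+x_3$, and take
\[
x_1(t+1)=[s]_+,\quad x_2(t+1)=[-s]_+,
\]
so that $x_1(t)-x_2(t)=s(\xvec(t-1))$ for $t\ge1$. For the third hidden node and the paired observed node take
\[
x_3(t+1)=[w]_+,\quad x_6(t+1)=[-w]_+,\qquad w=x_1-x_2+x_3,
\]
so that $x_3(t)-x_6(t)=w(\xvec(t-1))$. Finally take
\[
x_4(t+1)=[u]_+,\quad x_5(t+1)=[-u]_+,\qquad u=x_1-x_2-x_3,
\]
where the coefficients are to be tuned so that the reconstruction below works. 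The observed nodes give $x_4(0),x_5(0),x_6(0)$ at $t=0$ directly. For $t\ge1$ they give
\[
u(\xvec(t-1))=x_4(t)-x_5(t)
\]
and $x_6(t)=[-w(\xvec(t-1))]_+$.

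\emph{Reconstruction in order.} First, at $t=1$ read off $u(\xvec(0))$. Next, at $t=2$ we have
\[
u(\xvec(1))=s(\xvec(0))-x_3(1)=s(\xvec(0))-[w(\xvec(0))]_+ .
\]
Then exploit the shift structure: at $t=3$, $u(\xvec(2))=s(\xvec(1))-x_3(2)$, where $s(\xvec(1))$ involves $x_1(1)-x_2(1)+x_3(1)=w(\xvec(0))$ combinations. The goal is to collect three independent piecewise-linear equations in $x_1(0),x_2(0),x_3(0)$. The key simplification is that $x_6(1)=[-w(\xvec(0))]_+$ is observed, and $x_3(1)=[w(\xvec(0))]_+$ is complementary to it. Whenever $x_6(1)>0$ we therefore know $w(\xvec(0))$ and $x_3(1)=0$. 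When $x_6(1)=0$ we have $x_3(1)=w(\xvec(0))\ge0$, which enters the later equations linearly. So we split into the two cases $w(\xvec(0))\le0$ and $w(\xvec(0))\ge0$. In each case the outputs over $t\in\llbracket0,3\rrbracket$ give an affine system for $(x_1(0),x_2(0),x_3(0))$, and I will check that it is nonsingular; this is a $3\times3$ determinant computation per case. The two cases are consistent on the boundary $w=0$, and which case applies is itself decided by the observed sign of $x_6(1)$. This gives global injectivity of the finite-horizon observation map.

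\emph{Main obstacle.} The hardest step is choosing the coefficients of $u$ and $w$, and possibly replacing $s$ by another form, so that three conditions hold at once. First, the information carried by the unpaired hidden node $x_3$, which lacks an observed partner except through $x_6$, is recovered with its sign. Second, the affine systems in both activation cases are nonsingular. Third, every update uses exactly three variables with nonzero coefficients. If the above choice fails the determinant check, I will instead make the hidden triple form a closed ``rotation'' in which each of $x_1,x_2,x_3$ is half of a sign-split pair with an observed node. For instance, take $x_4$ paired with $x_1$, $x_5$ with $x_2$, and $x_6$ with $x_3$, so that $x_i(t)-x_{i+3}(t)$ equals a linear form $\ell_i(\xvec(t-1))$. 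I will then choose $\ell_1,\ell_2,\ell_3$, each in three variables, so that the observed differences determine an invertible linear map, exactly mirroring the $t=1,2$ reconstruction formulas in the $K=2$ example.
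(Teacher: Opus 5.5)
\textbf{There is a genuine gap.} Your proposal never settles on a network and verifies it, and the one concrete network you write down is not observable. Write $s_0,w_0,u_0$ for $s,w,u$ evaluated at $\xvec(0)$.

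On the open cone $\{s_0>0,\ w_0>0\}$ one has $\xvec(1)=(s_0,0,w_0,[u_0]_+,[-u_0]_+,0)$ and $\xvec(2)=(s_0+w_0,0,s_0+w_0,[s_0-w_0]_+,[w_0-s_0]_+,0)$. From $t=2$ on, $x_1=x_3>0$ and $x_2=0$, so $u\equiv 0$, $w>0$, and all outputs vanish for $t\ge 3$. Hence, beyond $x_4(0),x_5(0),x_6(0)$, the outputs depend on the hidden coordinates only through $u_0=x_1(0)-x_2(0)-x_3(0)$ and $s_0-w_0=2x_2(0)$. That is a rank-two map on an open subset of $\mathbb{R}^3$, which is exactly the activation-induced rank loss you wanted to avoid. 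Concretely, $\xvec(0)=(1,0,1,0,0,0)$ and $(2,0,2,0,0,0)$ evolve as $(2^t,0,2^t,0,0,0)$ and $(2^{t+1},0,2^{t+1},0,0,0)$, and both give $\yvec(t)=\mathbf{0}_3$ for every $t$.

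This also exposes the flaw in your reconstruction step. Splitting only on the sign of $w_0$ (i.e.\ on whether $x_6(1)>0$) does not fix the activation pattern over $\llbracket 0,3\rrbracket$. For $w_0\ge 0$, the outputs $x_6(2)=[-(s_0+w_0)]_+$ and $u(\xvec(2))=|s_0|+w_0-[s_0+w_0]_+$ still switch with the signs of $s_0$ and $s_0+w_0$. So ``one $3\times3$ determinant per case'' is not a valid test, and on the sub-cone above the map is genuinely singular.

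The fallback does not close the gap. If each hidden $x_i$ is paired with an observed $x_{i+3}$, only $x_{i+3}(t)=[-\ell_i(\xvec(t-1))]_+$ is measured. The differences $x_i(t)-x_{i+3}(t)$ are therefore \emph{not} observed, whereas the $K=2$ formulas use both halves of a measured sign-split pair. No injectivity argument is offered for this variant.

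The missing ingredient is a specific choice of forms under which the hidden coordinates cannot settle into a cone where the measured channels are silent. The paper keeps exactly your architecture:
\begin{itemize}
\item a hidden pair $(x_1,x_2)$;
\item a half-measured pair $(x_3,x_4)$ built from $s_2=x_1-x_2+x_3$ (your $w$), with $x_4=[-s_2]_+$ measured;
\item a fully measured pair $(x_5,x_6)$.
\end{itemize}
It differs only in taking $s_1=-x_1+x_2+x_3$ for the hidden pair and $s_3=x_1+x_2-x_3$ for the measured one. This gives $s_1(t+1)=-s_1(t)+[s_2(t)]_+$, $s_2(t+1)=s_1(t)+[s_2(t)]_+$ and $s_3(t+1)=|s_1(t)|-[s_2(t)]_+$. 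The sign flip in $s_1$ keeps feeding information into $s_3(t)=x_5(t+1)-x_6(t+1)$ and $x_4(t+1)=[-s_2(t)]_+$.

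The paper then splits into three cases on observed quantities: $x_4(2)>0$; $x_4(2)=0$ with $x_5(2)-x_6(2)>0$; and $x_4(2)=0$ with $x_5(2)-x_6(2)\le 0$. In each case it recovers $s_1(0)$ and $[s_2(0)]_+$ from $x_4(2)$, $x_5(2)-x_6(2)$ and $x_5(3)-x_6(3)$. Combining this with $s_2(0)=[s_2(0)]_+-x_4(1)$, $s_3(0)=x_5(1)-x_6(1)$ and the invertibility of $S_3=\mathbf{1}_3\mathbf{1}_3^{\mathsf T}-2I_3$ yields $\xvec(0)$.
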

\begin{proof}
Consider the following general 3-ReLU network:
\begin{eqnarray*}
x_1(t+1) & = & \max(-x_1(t)+x_2(t)+x_3(t),0),\\
x_2(t+1) & = & \max(x_1(t)-x_2(t)-x_3(t),0),\\
x_3(t+1) & = & \max(x_1(t)-x_2(t)+x_3(t),0),\\
x_4(t+1) & = & \max(-x_1(t)+x_2(t)-x_3(t),0),\\
x_5(t+1) & = & \max(x_1(t)+x_2(t)-x_3(t),0),\\
x_6(t+1) & = & \max(-x_1(t)-x_2(t)+x_3(t),0),\\
y_1(t) & = & x_4(t),\\
y_2(t) & = & x_5(t),\\
y_3(t) & = & x_6(t).
\end{eqnarray*}
For $i\in\llbracket1,3\rrbracket$, let $s_i(t):=\sum_{j=1}^{3}x_j(t)-2x_i(t)$.
Then we have
\begin{eqnarray*}
x_1(0)=\frac{s_2(0)+s_3(0)}{2},\quad
x_2(0)=\frac{s_1(0)+s_3(0)}{2},\quad
x_3(0)=\frac{s_1(0)+s_2(0)}{2}.
\end{eqnarray*}
Since $x_4,~x_5,~x_6$ are selected as observation nodes,
 $x_4(0),x_5(0),x_6(0)$ are directly
available from the output sequence at $t=0$. Therefore, recovering the entire
initial state reduces to determining $s_1(0),s_2(0),s_3(0)$.

To this end, define $[z]_+:=\max(z,0),~z\in\mathbb{R}$. Then we have $x_{2i-1}(t+1)=[s_{i}(t)]_{+}$ and $x_{2i}(t+1)=[-s_{i}(t)]_{+}$, where $i\in\llbracket1,3\rrbracket$.
According to $[z]_{+}-[-z]_{+}=z$ and $[z]_{+}+[-z]_{+}=|z|$,
we have
\begin{eqnarray*}
s_1(t+1)&=&-[s_1(t)]_{+}+[-s_1(t)]_{+} +[s_2(t)]_+=-s_1(t)+[s_2(t)]_+,\\
s_2(t+1)&=&[s_1(t)]_{+}-[-s_1(t)]_{+} +[s_2(t)]_+=s_1(t)+[s_2(t)]_+,\\
s_3(t+1)&=&[s_1(t)]_{+}+[-s_1(t)]_{+} -[s_2(t)]_+=|s_1(t)|-[s_2(t)]_+.
\end{eqnarray*}
First, we obtain
$s_3(0)=[s_3(0)]_{+}-[-s_3(0)]_{+}=x_5(1)-x_6(1)$.
Moreover, we have
\begin{eqnarray}
x_4(2)  =  [-s_2(1)]_+ =  [-s_1(0)-[s_2(0)]_+]_+,
\end{eqnarray}
\begin{eqnarray}
x_5(2)-x_6(2)=s_3(1)=|s_1(0)|-[s_2(0)]_+,
\end{eqnarray}
\begin{eqnarray}
x_5(3)-x_6(3)=s_3(2)=|-s_1(0)+[s_2(0)]_+|-[s_1(0)+[s_2(0)]_+]_+.
\end{eqnarray}

If $x_4(2)>0$, i.e., $-s_{1}(0)>[s_2(0)]_+$, then $x_4(2)=-s_1(0)-[s_2(0)]_+,~x_5(2)-x_6(2)=-s_1(0)-[s_2(0)]_+$, and $x_5(3)-x_6(3)=-s_1(0)+[s_2(0)]_+$. Hence, we have
\begin{eqnarray*}
s_1(0)&=&-\frac{x_{4}(2)+x_{5}(3)-x_{6}(3)}{2},\\
s_{2}(0)&=&\frac{x_5(3)-x_6(3)-x_4(2)}{2}-x_4(1).
\end{eqnarray*}

If $x_4(2)=0$ and $x_{5}(2)-x_{6}(2)>0$, i.e., $-s_{1}(0)\leq[s_2(0)]_+<s_{1}(0)$, then $x_5(2)-x_6(2)=s_1(0)-[s_2(0)]_+$, and $x_5(3)-x_6(3)=-2[s_2(0)]_+$. Hence, we have
\begin{eqnarray*}
s_1(0)&=&(x_{5}(2)-x_{6}(2))-\frac{x_{5}(3)-x_{6}(3)}{2},\\
s_{2}(0)&=&-\frac{x_5(3)-x_6(3)}{2}-x_4(1).
\end{eqnarray*}

If $x_4(2)=0$ and $x_{5}(2)-x_{6}(2)\leq0$, i.e., $-s_{1}(0)\leq[s_2(0)]_+$ and $|s_{1}(0)|=-s_{1}(0)$, then $x_5(2)-x_6(2)=-s_1(0)-[s_2(0)]_+$, and $x_5(3)-x_6(3)=-2s_1(0)$. Hence, we have
\begin{eqnarray*}
s_1(0)&=&-\frac{x_{5}(3)-x_{6}(3)}{2},\\
s_{2}(0)&=&-(x_{5}(2)-x_{6}(2))+\frac{x_5(3)-x_6(3)}{2}-x_4(1).
\end{eqnarray*}

Therefore, when $x_{4}(2)>0$, we have
\begin{eqnarray*}
x_1(0)
&=&\frac{x_5(3)-x_6(3)-x_4(2)}{4}
-\frac{x_4(1)}{2}
+\frac{x_5(1)-x_6(1)}{2},\\
x_2(0)
&=&-\frac{x_4(2)+x_5(3)-x_6(3)}{4}
+\frac{x_5(1)-x_6(1)}{2},\\
x_3(0)
&=&-\frac{x_4(2)+x_4(1)}{2},
\end{eqnarray*}
when $x_{4}(2)=0$, we have
\begin{eqnarray*}
x_1(0)
&=&-\frac{x_{4}(1)}{2}+\frac{x_5(1)-x_6(1)}{2}+\frac{[x_{6}(2)-x_{5}(2)]_{+}}{2}+\frac{|x_5(3)-x_6(3)|}{4},\\
x_2(0)
&=&\frac{x_5(1)-x_6(1)}{2}+\frac{[x_{5}(2)-x_{6}(2)]_{+}}{2}-\frac{x_{5}(3)-x_{6}(3)}{4},\\
x_3(0)
&=&-\frac{x_{4}(1)}{2}+\frac{|x_{5}(2)-x_{6}(2)|}{2}+\frac{[-x_{5}(3)+x_{6}(3)]_{+}}{2},
\end{eqnarray*}
which means that the entire initial state $\xvec(0)$ can be uniquely recovered from
the output sequence $y_1(t),y_2(t),y_3(t)$ over $t\in\llbracket0,3\rrbracket$.
\end{proof}

We next generalize the above result to the case $K\geq 3$.

\begin{theorem}\label{theorem5}
For $n=2K,~K\geq3$, there exists a general $K$-ReLU network that is observable from $m=\frac{n}{2}$ observation nodes.
\end{theorem}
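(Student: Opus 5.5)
The plan is to generalize the construction of Proposition~\ref{proposition4} directly. Every update function depends only on the first $K$ coordinates $x_1,\ldots,x_K$, so each node has exactly $K$ relevant variables. The observation nodes are $x_{K+1},\ldots,x_{2K}$. For $i\in\llbracket1,K\rrbracket$, define
\[
s_i(t):=\sum_{j=1}^{K}x_j(t)-2x_i(t).
\]
The coefficient matrix $J-2I_K$ of these forms has eigenvalues $K-2$ and $-2$, so it is invertible for $K\geq3$. Hence recovering $(x_1(0),\ldots,x_K(0))$ is equivalent to recovering $(s_1(0),\ldots,s_K(0))$, by the explicit inverse
\[
x_i=\frac{1}{2}\Bigl(\frac{1}{K-2}\sum_{j=1}^{K}s_j-s_i\Bigr).
\]
The coordinates $x_{K+1}(0),\ldots,x_{2K}(0)$ are read off at $t=0$. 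As in Proposition~\ref{proposition4}, I would let the $2K$ nodes carry the positive and negative parts of the $K$ forms, with one sign pattern in the coefficients per node:
\[
x_{2i-1}(t+1)=[\sigma_i(t)]_+,\qquad x_{2i}(t+1)=[-\sigma_i(t)]_+,
\]
where $\sigma_i$ is a suitable signed version of $s_i$, chosen so that every coefficient is $\pm1$.

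\textbf{Step 1: closed recursion.} Using the identities $[z]_+-[-z]_+=z$ and $[z]_++[-z]_+=|z|$, I would derive a closed recursion for $s(t+1)$ in terms of $s(t)$, generalizing the three-variable one. Forms whose positive and negative parts both lie among the first $K$ nodes enter the recursion linearly. A form split across the hidden/observed boundary enters through $[\cdot]_+$. When $K$ is odd, one pair straddles index $K$, exactly as the pair $(x_3,x_4)$ does when $K=3$, and this pair enters through $|\cdot|$ and $[\cdot]_+$.

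\textbf{Step 2: linear recovery of the fully observed forms.} Each pair $(x_{2i-1},x_{2i})$ with both indices greater than $K$ yields $s_i(t-1)=x_{2i-1}(t)-x_{2i}(t)$ exactly. From $t=1$ this gives about $K/2$ of the values $s_i(0)$ linearly. At $t=2$ it gives the same forms evaluated at $s(1)$, and these are piecewise-linear combinations of $s(0)$.

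\textbf{Step 3: recovery of the remaining forms.} Iterating over $t\in\llbracket 1,T\rrbracket$, with $T$ a small constant (perhaps $T=3$ as when $K=3$) or $O(K)$, I would solve for the remaining $s_i(0)$. At each time step, I would split into cases according to the signs of the quantities inside $[\cdot]_+$ and $|\cdot|$. The case is always decidable from the outputs: it is read from whether an observed positive part vanishes, or from the sign of an observed difference. This mirrors the tests $x_4(2)>0$ and $x_5(2)-x_6(2)>0$ in Proposition~\ref{proposition4}. Within each case, the map from $s(0)$ to the observed quantities is affine and triangular, so it can be inverted explicitly.

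The main obstacle is choosing the signs in the forms so that two things hold at once. First, the recursion closes with few nonlinear terms, ideally only one straddling pair contributing $[s_j]_+$ and $|s_j|$. Second, every region of the resulting case split gives an invertible affine system. I would first try to keep the $K=3$ coefficient patterns on one block of three coordinates. The remaining coordinates would then be arranged in fully observed $\pm$ pairs, or, for the hidden pairs, in pairs with antisymmetric coefficients so that they evolve linearly. That would reduce the nonlinear bookkeeping to the already verified three-variable case plus a linear Kalman-type rank condition on the rest, which I would check as in the positive ReLU proposition. The parity of $K$ may require two slightly different patterns.
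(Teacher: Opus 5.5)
\textbf{There is a genuine gap: Step 3 is asserted, not proved, and it is the whole content of the theorem.}

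Your reduction matches the paper's. With $\sigma_i=s_i$ (no further sign choices are needed) this is exactly the paper's network. The inversion via $S_K^{-1}$ and the direct read-out $s_i(t)=x_{2i-1}(t+1)-x_{2i}(t+1)$ for $2i-1>K$ are also the same. The difficulty is in Step 3. In this network $s_{2i-1}(t+1)=-s_i(t)+\sum_{j\le q,\,j\ne i}|s_j(t)|$ and $s_{2i}(t+1)=s_i(t)+\sum_{j\le q,\,j\ne i}|s_j(t)|$, plus $[s_{q+1}(t)]_+$ when $K$ is odd. So every form carries absolute values of up to $q$ hidden forms. You give no argument that the resulting sign regions can be decided from the outputs, nor that each regional affine map is triangular and injective.

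Your Step 1 bookkeeping is also off. The straddling pair enters only through $[s_{q+1}]_+$. A fully hidden pair enters linearly only into its own two children, and through $|\cdot|$ everywhere else.

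\textbf{The missing idea is the sibling-difference identity.} For $i\le q$, $s_{2i}(t+1)-s_{2i-1}(t+1)=2s_i(t)$ cancels every nonlinear term. This gives an exact linear binary-tree recursion, with no case split, that carries $s_i(0)$, $i\ge 2$, down to observed indices after $\lceil\log_2\frac{q}{i-1}\rceil$ steps. Two points then need separate treatment:
\begin{itemize}
\item $s_1$ is its own child, so it must be obtained from $s_1(0)=s_2(1)-\sum_{j=2}^{q}|s_j(0)|$, minus $[s_{q+1}(0)]_+$ for odd $K$.
\item For odd $K$, only the negative part $x_{K+1}(1)$ of the straddling form is observed. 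Its positive part comes from $2[s_{q+1}(0)]_+=\max(s_{K-2}(1),s_{K-1}(1))-s_K(1)$. This is directly readable from the outputs only when $K\ge 7$, so $K=3$ and $K=5$ require the separate case analyses of Propositions~\ref{proposition4} and~\ref{proposition5}.
\end{itemize}

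\textbf{The fallback in your last paragraph would not close the gap.}
\begin{itemize}
\item For $K>3$, nodes that keep the three-variable pattern of Proposition~\ref{proposition4} have only three relevant variables, which violates the exact-indegree-$K$ requirement. Once the missing variables are added, the block no longer decouples from the rest.
\item Suppose a hidden pair $(x_a,x_b)$ gets antisymmetric coefficients, so that it enters the updates only through $x_a-x_b$ (which is what makes it evolve linearly). Then $\xvec(1)$ and $\yvec(0)$ are unchanged when $\xvec(0)$ is replaced by $\xvec(0)+\lambda(\mathbf{e}_{n,a}+\mathbf{e}_{n,b})$. The network is then not observable, and no Kalman-type rank condition on that part can hold.
\end{itemize}
The real obstacle is not the choice of signs but the inversion, and the sibling-difference identity is what resolves it.
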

\begin{proof}
Consider the following general $K$-ReLU network:
\begin{eqnarray*}
x_{2i-1}(t+1) & = & \max(s_{i}(t),0),\\
x_{2i}(t+1) & = & \max(-s_{i}(t),0),\\
y_{1}(t) & = & x_{K+1}(t),\\
y_{2}(t) & = & x_{K+2}(t),\\
&\vdots&\\
y_{K}(t) & = & x_{2K}(t),
\end{eqnarray*}
where $i\in\llbracket1,K\rrbracket$ and $s_{i}(t):=\sum_{j=1}^{K}x_j(t)-2x_i(t)$.

First, since $x_{K+1},\ldots,x_{2K}$ are selected as observation nodes,
their initial values $x_{K+1}(0),\ldots,x_{2K}(0)$ are directly
available from the output sequence at $t=0$.

On the other hand, we have $(s_1(0),\ldots,s_K(0))=S_K(x_1(0),\ldots,x_K(0))$,
where
\begin{eqnarray*}
S_K
=
\mathbf{1}_K\mathbf{1}_K^{\mathsf T}-2I_K
=
\begin{bmatrix}
-1 & 1 & \cdots & 1\\
1 & -1 & \cdots & 1\\
\vdots & \vdots & \ddots & \vdots\\
1 & 1 & \cdots & -1
\end{bmatrix}.
\end{eqnarray*}
For $K\geq 3$, $S_K$ is nonsingular, and
$S_K^{-1}=-\frac{1}{2}I_K+\frac{1}{2(K-2)}\mathbf{1}_K\mathbf{1}_K^{\mathsf T}$.
Therefore, we have
\begin{eqnarray*}
(x_1(0),\ldots,x_K(0))=S_K^{-1}(s_1(0),\ldots,s_K(0)).
\end{eqnarray*}
Hence, recovering the entire
initial state reduces to determining $s_1(0),\ldots,s_{K}(0)$.

To this end, define $[z]_+:=\max(z,0),~z\in\mathbb{R}$.
Clearly, we have $[z]_{+}-[-z]_{+}=z$ and $[z]_{+}+[-z]_{+}=|z|$.

Suppose that $K$ is even and let $q:=\frac{K}{2}$, one has
\begin{eqnarray*}
s_{2i-1}(t+1)&=&-s_i(t)+\sum_{\substack{j=1\\j\neq i}}^{q}|s_j(t)|,\\
s_{2i}(t+1)&=&s_i(t)+\sum_{\substack{j=1\\j\neq i}}^{q}|s_j(t)|.
\end{eqnarray*}
For $i\in\llbracket q+1,K\rrbracket$, we have $s_i(t)=x_{2i-1}(t+1)-x_{2i}(t+1)$,
and hence $s_i(t),~i\in\llbracket q+1,K\rrbracket$ is directly determined from the observation nodes.

Moreover, for $i\in\llbracket 1,q\rrbracket$,
\begin{eqnarray}\label{e4}
s_i(t)=\frac{s_{2i}(t+1)-s_{2i-1}(t+1)}{2}.
\end{eqnarray}
Therefore, for each $i\in\llbracket 2,q\rrbracket$, $s_i(0)$ can be recovered recursively using (\ref{e4}).  However, (\ref{e4}) cannot be used directly to recover $s_1(0)$,
since, for $i=1$ and $t=0$, its right-hand side involves the unknown value $s_1(1)$. After $d$ recursive steps, the resulting indices lie in $\llbracket (i-1)2^d+1,i2^d\rrbracket$.
Hence, all the resulting indices belong to $\llbracket q+1,K\rrbracket$ once
$(i-1)2^d\ge q$.
The minimum number of recursive steps required to recover $s_i(0)$ is therefore
$d_i=\left\lceil\log_2\frac{q}{i-1}\right\rceil$.
The corresponding terminal values are directly obtained from the observation nodes at time $d_i+1$. Since $d_i$ is maximized at $i=2$, all the values
$s_2(0),s_3(0),\ldots,s_q(0)$
can be recovered from the observation nodes up to time
$\left\lceil\log_2q\right\rceil+1$.
Finally, since $s_2(1)=s_1(0)+\sum_{j=2}^{q}|s_j(0)|$,
we obtain $s_1(0)=s_2(1)-\sum_{j=2}^{q}|s_j(0)|$.

Therefore, $s_1(0),\ldots,s_q(0)$ can be uniquely recovered
from the output sequence  $y_{1}(t),\ldots,y_{K}(t)$ over $t\in\llbracket 0,\lceil \log_{2}(q) \rceil+1\rrbracket$.

Similarly, suppose that $K$ is odd and let $q:=\frac{K-1}{2}$. Then, for $i\in\llbracket 1,q\rrbracket$, we have
\begin{eqnarray}\label{e5}
s_{2i-1}(t+1) = -s_i(t)+\sum_{\substack{j=1\\j\neq i}}^{q}|s_j(t)|+[s_{q+1}(t)]_{+},
\end{eqnarray}
\begin{eqnarray}\label{e6}
s_{2i}(t+1) = s_i(t)+\sum_{\substack{j=1\\j\neq i}}^{q}|s_j(t)|+[s_{q+1}(t)]_{+},
\end{eqnarray}
and $s_{K}(t+1)=\sum_{j=1}^{q}|s_{j}(t)|-[s_{q+1}(t)]_{+}$. We also have $s_i(t)=\frac{s_{2i}(t+1)-s_{2i-1}(t+1)}{2}$.

Substituting $i=q$ into Eqs. (\ref{e5}) and (\ref{e6}), we obtain
\begin{eqnarray*}
s_{K-2}(1)&=&-s_q(0)+\sum_{j=1}^{q-1}|s_j(0)|+[s_{q+1}(0)]_+,\\
s_{K-1}(1)&=&s_q(0)+\sum_{j=1}^{q-1}|s_j(0)|+[s_{q+1}(0)]_+.
\end{eqnarray*}
Hence, $\max(s_{K-2}(1),s_{K-1}(1))=\sum_{j=1}^{q}|s_j(0)|+[s_{q+1}(0)]_+$.
Together with $s_K(1)=\sum_{j=1}^{q}|s_j(0)|-[s_{q+1}(0)]_+$,
this gives
\begin{eqnarray*}
[s_{q+1}(0)]_+=\frac{\max(s_{K-2}(1),s_{K-1}(1))-s_K(1)}{2}.
\end{eqnarray*}
Since
$x_{K+1}(1)=[-s_{q+1}(0)]_+$,
we obtain
\begin{eqnarray*}
s_{q+1}(0)=
\frac{\max(s_{K-2}(1),s_{K-1}(1))-s_K(1)}{2}
-x_{K+1}(1),
\end{eqnarray*}
where $s_{K-2}(1)=x_{2K-5}(2)-x_{2K-4}(2),~s_{K-1}(1)=x_{2K-3}(2)-x_{2K-2}(2)$, and
$s_K(1)=x_{2K-1}(2)-x_{2K}(2)$.
If $2K-5\geq K+1$, i.e., $K\geq6$, then $s_{q+1}(0)$ can be uniquely recovered from the observation nodes.

Consequently, for $i\in\llbracket q+2,K\rrbracket$, we have $s_{i}(t)=x_{2i-1}(t+1)-x_{2i}(t+1)$,
hence, $s_{q+2}(t),\ldots,s_K(t)$ can be uniquely recovered
from the observation nodes.

For each $i\in\llbracket2,q\rrbracket$, repeated application of (\ref{e4}) shows that
$s_2(0),s_3(0),\ldots,s_q(0)$
can be recovered from the observations up to time
$\left\lceil\log_2 q\right\rceil+1$.
Finally,
$s_1(0)=s_2(1)-\sum_{j=2}^{q}|s_j(0)|-[s_{q+1}(0)]_+$.

Therefore, $s_1(0),\ldots,s_q(0)$ can be uniquely recovered
from the output sequence $y_{1}(t),\ldots,y_{K}(t)$ over $t\in\llbracket 0,\lceil \log_{2}(q) \rceil+1\rrbracket$.

The case $K=5$ is to be established in Proposition \ref{proposition5}.

Combining the above cases, for every $K\geq 3$, the entire initial state
$\xvec(0)$ can be uniquely recovered from the output sequence $y_{1}(t),\ldots,y_{K}(t)$ over $t\in\llbracket 0,\lceil \log_{2}(q) \rceil+1\rrbracket$.
\end{proof}

\begin{proposition}\label{proposition5}
For $n=10$, there exists a general 5-ReLU network that is observable from $m=5$ observation nodes.
\end{proposition}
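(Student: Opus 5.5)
We would use the network from the proof of Theorem~\ref{theorem5} with $K=5$: $x_{2i-1}(t+1)=[s_i(t)]_+$ and $x_{2i}(t+1)=[-s_i(t)]_+$ for $i\in\llbracket1,5\rrbracket$, with observation nodes $x_6,\ldots,x_{10}$. As there, $S_5$ is nonsingular, so it suffices to recover $s_1(0),\ldots,s_5(0)$. The general argument fails for $K=5$ only because the indices $2K-5=5$ and $2K-4=6$ are not both observed. So we redo the small case by hand, in the style of Proposition~\ref{proposition4}.

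\textbf{Step 1 (what the outputs give).} Since $\sum_{j=1}^5x_j(t+1)=|s_1(t)|+|s_2(t)|+[s_3(t)]_+$, the $s$-dynamics are
\begin{eqnarray*}
s_1(t+1)&=&-s_1+|s_2|+[s_3]_+,\qquad s_2(t+1)=s_1+|s_2|+[s_3]_+,\\
s_3(t+1)&=&|s_1|-s_2+[s_3]_+,\qquad s_4(t+1)=|s_1|+s_2+[s_3]_+,\\
s_5(t+1)&=&|s_1|+|s_2|-[s_3]_+,
\end{eqnarray*}
with all right-hand sides evaluated at time $t$. The observations give three kinds of data:
\begin{itemize}
\item $s_4(t)=x_7(t+1)-x_8(t+1)$ and $s_5(t)=x_9(t+1)-x_{10}(t+1)$ for every $t$;
\item $[-s_3(t)]_+=x_6(t+1)$ for every $t$;
\item $s_4(0)$ and $s_5(0)$ directly.
\end{itemize}
Hence only $a:=s_1(0)$, $b:=s_2(0)$ and $c:=s_3(0)$ remain unknown.

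\textbf{Step 2 (recover $c$).} If $x_6(1)>0$, then $c=-x_6(1)$. Otherwise $c\ge0$. In that case set $P:=s_4(1)=|a|+b+c$ and $Q:=s_5(1)=|a|+|b|-c$, and use time-$2$ data to separate the contributions. Write $u:=|b|+[c]_+\ge0$. Then $s_1(1)=u-a$ and $s_2(1)=u+a$, so
\begin{eqnarray*}
s_4(2)&=&2\max(u,a)+[s_3(1)]_+,\\
s_5(2)&=&2\max(u,|a|)-[s_3(1)]_+,
\end{eqnarray*}
where $s_3(1)=|a|-b+[c]_+=P-2b$ and $[-s_3(1)]_+=x_6(2)$ is observed.

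\textbf{Step 3 (case split on signs).} We would split on the sign pattern of $(b,\,s_3(1),\,a\pm u)$, exactly as Proposition~\ref{proposition4} split on $x_4(2)>0$ versus $x_4(2)=0$ and on the sign of $x_5(2)-x_6(2)$. In each region the equations for $P$, $Q$, $s_4(2)$, $s_5(2)$ and $x_6(1)$, $x_6(2)$ are affine in $(a,b,c)$. We would check that the resulting linear system has rank $3$. We would also check that the region itself is identifiable from observed quantities, for example via the sign of $x_6(2)$ and comparisons such as $P-Q=2[c]_+-2[-b]_+$. Inverting each affine system then gives explicit formulas for $s_1(0),s_2(0),s_3(0)$, in the same style as the formulas ending the proof of Proposition~\ref{proposition4}. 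Finally $\xvec(0)$ follows by applying $S_5^{-1}$.

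\textbf{Main obstacle.} The hard step is the region where $x_6(1)=x_6(2)=0$, so $c\ge0$ and $s_3(1)\ge0$, and additionally $b\ge0$. In that region the time-$\le2$ data reduce to $|a|+b+c$, $|a|+b-c$, $2\max(u,a)+s_3(1)$ and $2\max(u,|a|)-s_3(1)$. These equations may only determine $|a|$, not the sign of $a$. If so, we would extend the horizon to $t=3$ and use $s_4(3)$, $s_5(3)$ and $x_6(3)$. The asymmetric terms $s_1(1)=u-a$ and $s_2(1)=u+a$ feed $[\cdot]_+$ and $|\cdot|$ differently at the next step, which should break the sign ambiguity, as $x_5(3)-x_6(3)$ did for $K=3$. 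The proof would conclude that the outputs over $t\in\llbracket0,3\rrbracket$ determine $\xvec(0)$ uniquely.
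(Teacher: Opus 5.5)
Your setup matches the paper's: the same network, the reduction to $s_1(0),s_2(0),s_3(0)$ through $S_5^{-1}$, and the same $s$-dynamics. The proof itself, however, is missing. Step~3 and the ``main obstacle'' only list checks you would perform. Checking rank $3$ on each affine piece is not enough, because the real difficulty is identifying the piece from the data.

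Your stated conclusion is in fact false. Take $\xvec(0)=(\frac{1}{3},\frac{1}{3},-\frac{2}{3},\frac{5}{6},\frac{5}{6},0,0,0,0,0)$ and $\tilde{\xvec}(0)=(1,0,-1,\frac{1}{2},\frac{1}{2},0,0,0,0,0)$, i.e.\ $(s_1,\ldots,s_5)(0)=(\pm1,1,3,0,0)$.
\begin{itemize}
\item Both give $(x_6,\ldots,x_{10})=(0,0,0,0,0),(0,0,0,0,0),(0,5,0,0,1),(0,11,0,5,0)$ at $t=0,1,2,3$.
\item They first differ at $t=4$, where $x_9(4)=15$ versus $7$.
\item Both lie in the interior of exactly the region you flagged ($b,c>0$, $|a|<u$, $s_3(1),s_3(2)>0$). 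There the output map is affine of rank $3$ on the side $a>0$ and on the side $a<0$, yet the two pieces differ only by $a\mapsto-a$.
\end{itemize}
Your fallback also uses $s_4(3),s_5(3)$, which are read at $t=4$. That already contradicts the horizon $\llbracket0,3\rrbracket$ you announce.

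The missing idea is a way to recover the hidden positive part $p_t=[s_3(t)]_+$ without knowing the active branch. The paper writes $p_t=\max(D_t,B_t(p_{t+1}))$, with $D_t=\frac{1}{2}(s_4(t+1)-s_5(t+1))$ and $B_t$ affine and increasing. It combines this with $p_{t+1}\ge D_{t+1}$ and the fact that $s_2(t),s_2(t+1),s_2(t+2)$ cannot all be negative. This gives $p_t=\max(D_t,B_t(D_{t+1}),B_t(B_{t+1}(D_{t+2})))$ and closed formulas from the outputs on $\llbracket0,6\rrbracket$.

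Your route can be completed with horizon $\llbracket0,4\rrbracket$, but only through an argument of this kind:
\begin{itemize}
\item Since $s_1(t)+s_2(t)\ge0$ for $t\ge1$, $s_2(1)<0$ forces $s_2(2)>0$. Hence $p_1=\max\bigl(D_1,\frac{1}{2}(D_2-x_6(3)-s_5(2))\bigr)$.
\item Then $s_2(0)=\frac{1}{2}(s_4(1)-p_1+x_6(2))$, $p_0=D_0+[-s_2(0)]_+$, and $|s_1(0)|=s_4(1)-s_2(0)-p_0$.
\item With your $u=|s_2(0)|+p_0$, if $|s_1(0)|>u$ the sign of $s_1(0)$ is read from $\frac{1}{2}(s_4(2)+s_5(2))=|s_1(1)|+[s_2(1)]_+$.
\item If $|s_1(0)|\le u$, then $s_2(2)=2u+p_1\ge0$, so $D_2=p_2$ and $s_1(0)=\frac{1}{2}(p_1-D_2+x_6(3))$.
\end{itemize}
As written, your proposal does not establish the proposition.
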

\begin{proof}
Consider the following general 5-ReLU network:
\begin{eqnarray*}
x_1(t+1) & = & \max(-x_1(t)+x_2(t)+x_3(t)+x_{4}(t)+x_{5}(t),0),\\
x_2(t+1) & = & \max(x_1(t)-x_2(t)-x_3(t)-x_{4}(t)-x_{5}(t),0),\\
x_3(t+1) & = & \max(x_1(t)-x_2(t)+x_3(t)+x_{4}(t)+x_{5}(t),0),\\
x_4(t+1) & = & \max(-x_1(t)+x_2(t)-x_3(t)-x_{4}(t)-x_{5}(t),0),\\
x_5(t+1) & = & \max(x_1(t)+x_2(t)-x_3(t)+x_{4}(t)+x_{5}(t),0),\\
x_6(t+1) & = & \max(-x_1(t)-x_2(t)+x_3(t)-x_{4}(t)-x_{5}(t),0),\\
x_7(t+1) & = & \max(x_1(t)+x_2(t)+x_3(t)-x_{4}(t)+x_{5}(t),0),\\
x_8(t+1) & = & \max(-x_1(t)-x_2(t)-x_3(t)+x_{4}(t)-x_{5}(t),0),\\
x_9(t+1) & = & \max(x_1(t)+x_2(t)+x_3(t)+x_{4}(t)-x_{5}(t),0),\\
x_{10}(t+1) & = & \max(-x_1(t)-x_2(t)-x_3(t)-x_{4}(t)+x_{5}(t),0),\\
y_1(t) & = & x_6(t),\\
y_2(t) & = & x_7(t),\\
y_3(t) & = & x_8(t),\\
y_4(t) & = & x_9(t),\\
y_5(t) & = & x_{10}(t).
\end{eqnarray*}
For $i\in\llbracket1,5\rrbracket$, let $s_i(t):=\sum_{j=1}^{5}x_j(t)-2x_i(t)$.
Then
\begin{eqnarray*}
s_1(t+1) & = & -s_1(t)+|s_2(t)|+[s_3(t)]_{+},\\
s_2(t+1) & = & s_1(t)+|s_2(t)|+[s_3(t)]_{+},\\
s_3(t+1) & = & -s_2(t)+|s_1(t)|+[s_3(t)]_{+},\\
s_4(t+1) & = & s_2(t)+|s_1(t)|+[s_3(t)]_{+},\\
s_5(t+1) & = & |s_1(t)|+|s_2(t)|-[s_3(t)]_{+}.
\end{eqnarray*}
Hence, we have
\begin{eqnarray*}
s_1(t) & = & s_2(t+1)-|s_2(t)|-[s_3(t)]_{+},\\
s_2(t) & = & \frac{s_{4}(t+1)-s_{3}(t+1)}{2},\\
s_{3}(t) & = & {[s_{3}(t)]_{+}}-x_{6}(t+1),\quad [s_{3}(t)]_{+}=\frac{\max(s_{3}(t+1),s_{4}(t+1))-s_{5}(t+1)}{2},\\
s_{4}(t) & = & x_{7}(t+1)-x_{8}(t+1),\\
s_{5}(t) & = & x_{9}(t+1)-x_{10}(t+1).
\end{eqnarray*}
Therefore,
\begin{eqnarray}
s_1(0)&=&\frac{x_7(3)-x_8(3)-[s_3(2)]_{+}+x_6(3)}{2}-|s_2(0)|-[s_3(0)]_{+},\\
s_2(0)&=&\frac{x_7(2)-x_8(2)-[s_3(1)]_{+}+x_6(2)}{2},\\
s_3(0)&=&[s_3(0)]_{+}-x_6(1),\\
s_4(0)&=&x_7(1)-x_8(1),\\
s_5(0)&=&x_9(1)-x_{10}(1).
\end{eqnarray}

Let $p_t:=[s_3(t)]_+$.
Then we have
\begin{eqnarray*}
&&p_t=\max(D_t,B_t(p_{t+1}))=\left\{
\begin{array}{ll}
D_t & \mbox{if $s_{2}(t)\geq0$},\\
B_t(p_{t+1}) & \mbox{if $s_{2}(t)<0$},
\end{array}
\right.
\end{eqnarray*}
where
\begin{eqnarray*}
D_t&=&\frac{x_7(t+2)-x_8(t+2)-x_9(t+2)+x_{10}(t+2)}{2},\\
B_t(r)&=&\frac{r-x_6(t+2)-x_9(t+2)+x_{10}(t+2)}{2}.
\end{eqnarray*}
Specifically, whenever $s_{2}(t)\geq0$, we have $s_{4}(t+1)\geq s_{3}(t+1)$, and thus
\begin{eqnarray*}
[s_{3}(t)]_{+}=\frac{\max(s_{3}(t+1),s_{4}(t+1))-s_{5}(t+1)}{2}=\frac{s_{4}(t+1)-s_{5}(t+1)}{2}.
\end{eqnarray*}
When $s_{2}(t)<0$, we have
\begin{eqnarray*}
[s_{3}(t)]_{+}=\frac{\max(s_{3}(t+1),s_{4}(t+1))-s_{5}(t+1)}{2}=\frac{s_{3}(t+1)-s_{5}(t+1)}{2}.
\end{eqnarray*}

Moreover, at least one of $s_2(t),s_2(t+1),s_2(t+2)$ is nonnegative. Suppose that $s_2(t)<0$ and $s_2(t+1)<0$.
Since $s_2(t+1)=s_1(t)+|s_2(t)|+p_t<0$,
where $p_t:=[s_3(t)]_+\geq0$, we obtain $s_1(t)<-|s_2(t)|-p_t$.
It follows that
$s_1(t+1)=-s_1(t)+|s_2(t)|+p_t>2|s_2(t)|+2p_t>0$.
Consequently,
$s_2(t+2)=s_1(t+1)+|s_2(t+1)|+p_{t+1}>0$.

Thus, we have
\begin{eqnarray*}
&&p_t=\left\{
\begin{array}{ll}
D_t & \mbox{if $s_{2}(t)\geq0$},\\
B_t(D_{t+1}) & \mbox{if $s_2(t)<0$ and $s_2(t+1)\geq0$},\\
B_t(B_{t+1}(D_{t+2})) & \mbox{if $s_2(t)<0$ and $s_2(t+1)<0$}.
\end{array}
\right.
\end{eqnarray*}
In the last case, $s_2(t+2)>0$. Consequently, \begin{eqnarray*}
p_t=\max(D_t,B_t(D_{t+1}),B_t(B_{t+1}(D_{t+2}))).
\end{eqnarray*}
Thus, $p_0,p_1,p_2$ can be uniquely determined from the values of
$x_6(t),x_7(t),\ldots,x_{10}(t)$, $t\in\llbracket1,6\rrbracket$.
Then substituting $p_0,p_1,p_2$ into the Eqs. (7)--(11), we can uniquely determine $s_1(0),\ldots,s_5(0)$.

Finally, according to
\begin{eqnarray*}
(x_1(0),x_2(0),x_3(0),x_4(0),x_5(0))=S_5^{-1}(s_1(0),s_2(0),s_3(0),s_4(0),s_5(0)),
\end{eqnarray*}
where $S_5^{-1}=-\frac{1}{2}I_5+\frac{1}{6}\mathbf{1}_5\mathbf{1}_5^{\mathsf T}$,
we have
$x_i(0)=-\frac{1}{2}s_i(0)+\frac{1}{6}\sum_{j=1}^{5}s_j(0),~i\in\llbracket1,5\rrbracket$.
Therefore, the entire initial state $\xvec(0)$ can be uniquely recovered from
the output sequence of $y_1(t),\ldots,y_{5}(t)$ over $t\in\llbracket0,6\rrbracket$.
\end{proof}

Next, we show that $\frac{n}{2}$ is a general lower bound for the minimum number of observation nodes.
\begin{theorem}\label{theorem6}
For any general ReLU network with $n$ nodes, if the network is observable, then $m\geq\frac{n}{2}$,
where $m$ denotes the number of observation nodes.
\end{theorem}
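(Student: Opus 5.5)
The plan is to reduce observability to injectivity of a single ReLU layer restricted to an affine slice of initial states, and then use a two-sided dimension count in the spirit of known injectivity results for ReLU layers. The argument should need no hypothesis on $K$, which is consistent with the theorem imposing no restriction on the number of state variables per update.

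\textbf{Step 1 (reduction to one layer).} Write the network as $\xvec(t+1)=\max(A\xvec(t)+\bvec,\mathbf{0}_n)$, and let $O\subseteq\llbracket 1,n\rrbracket$ be the set of observed indices, with $|O|\le m$. Let $U$ be its complement, so $|U|\ge n-m$. The outputs $\yvec(0)$ are exactly $\xvec_O(0)$, and for $t\ge1$ the outputs $\yvec(t)$ are functions of $\xvec(1)$ alone. Therefore, if two initial states $\xvec(0)\neq\tilde\xvec(0)$ satisfy $\xvec_O(0)=\tilde\xvec_O(0)$ and $\xvec(1)=\tilde\xvec(1)$, they produce identical output sequences for every horizon $N$, and the network is not observable. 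Now fix $\xvec_O(0)=\mathbf{c}$ and let the free coordinates $\uvec:=\xvec_U(0)$ range over $\mathbb{R}^d$ with $d:=|U|$. Then $\xvec(1)=\max(L\uvec+\hvec,\mathbf{0}_n)$, where $L=A_{:,U}\in\mathbb{R}^{n\times d}$ and $\hvec=A_{:,O}\mathbf{c}+\bvec$. Observability therefore forces the map $F(\uvec):=\max(L\uvec+\hvec,\mathbf{0}_n)$ to be injective on $\mathbb{R}^d$.

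\textbf{Step 2 (local rank condition).} Let $\uvec_0$ be a point at which no component of $L\uvec_0+\hvec$ vanishes. Then the activation pattern is constant on a neighborhood of $\uvec_0$, and $F$ is affine there with linear part $L_{\mathcal A}$, the rows of $L$ indexed by the active set $\mathcal A=\{i:[L\uvec_0+\hvec]_i>0\}$. If $\rank L_{\mathcal A}<d$, pick $\mathbf{v}\neq\mathbf{0}$ in $\ker L_{\mathcal A}$. Then $F(\uvec_0)=F(\uvec_0+\epsilon\mathbf{v})$ for small $\epsilon$, which contradicts injectivity. Hence at every such generic point the active set must contain at least $d$ rows, and in particular $|\mathcal A|\ge d$.

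\textbf{Step 3 (two opposite directions).} Choose $\mathbf{w}\in\mathbb{R}^d$ such that $L_i\mathbf{w}\neq0$ for every row $i$; this is possible because the excluded set is a finite union of hyperplanes, excluding zero rows, which can never be active in a useful way. For $\uvec_0=\tau\mathbf{w}$ with $\tau$ large, the active set is $\{i:L_i\mathbf{w}>0\}$. For $\uvec_0=-\tau\mathbf{w}$, it is $\{i:L_i\mathbf{w}<0\}$. Both points are generic for large $\tau$. By Step 2 each of these disjoint sets has at least $d$ elements, so $n\ge 2d\ge 2(n-m)$, which gives $m\ge n/2$.

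I expect the main care to be needed in Step 1. That is where we must make sure that fixing $\xvec_O(0)$ and $\xvec(1)$ really freezes all future outputs. It does, because $\xvec(t)$ for $t\ge1$ is a deterministic function of $\xvec(1)$, and observed nodes are just coordinates of the state. A secondary point is the genericity of $\mathbf{w}$ in Step 3, which handles rows with $L_i\mathbf{w}=0$ and the ties that would otherwise blur the active set.
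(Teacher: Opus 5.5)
Your proposal is correct and is essentially the paper's proof read contrapositively: the paper also fixes the observed coordinates and reduces to equality of $\xvec(1)$. It then evaluates at $\pm r\vvec$ for a direction avoiding the hyperplanes of the nonzero rows (its set $J$), takes a nonzero kernel vector of the active rows, and concludes via $|I|\le n/2<n-m$, where you conclude via $n\ge 2d$. The only tidying needed is in Step 2: impose genericity only on rows with $L_i\neq 0$, since a zero row with $h_i=0$ vanishes identically and no point then satisfies your condition as literally stated; the active sets at $\pm\tau\mathbf{w}$ may also share zero rows with $h_i>0$, which is harmless because such rows do not contribute to $\rank L_{\mathcal A}$.
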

\begin{proof}
Suppose that $m<\frac{n}{2}$. Without loss of generality, assume that the observation nodes are $x_{1},x_{2},\ldots,x_{m}$. Consider the following general ReLU network:
\begin{eqnarray*}
x_{i}(t+1) & = & \max(f_{i}(\xvec(t)),0),\quad i\in\llbracket1,n\rrbracket,\\
y_{j}(t) & = & x_{j}(t),\quad j\in\llbracket1,m\rrbracket,
\end{eqnarray*}
where $f_{i}(\xvec(t))=\avec_{i}^{\mathsf T}\xvec(t)+b_{i},\avec_{i}\in\mathbb{R}^{n}$, and $b_{i}\in\mathbb{R}$.

We fix the first $m$ components of the initial state $\xvec(0)$ and denote them by $\uvec=(x_{1}(0),\ldots,x_{m}(0))\in\mathbb{R}^{m}$. Let the remaining $n-m$ components be free and denote them by $\zvec=(x_{m+1}(0),\ldots,x_{n}(0))\in\mathbb{R}^{n-m}$. We show that there exists two distinct initial states $\xvec^{1}(0)=(\uvec,\zvec^{1})$ and $\xvec^{2}(0)=(\uvec,\zvec^{2})$ such that $\xvec(1;\xvec^{1}(0))=\xvec(1;\xvec^{2}(0))$.

For the fixed $\uvec$, each function $f_{i}$ can be written as
\begin{eqnarray*}
f_{i}(\uvec,\zvec)=\alpha_{i}^{\mathsf T}\zvec+\beta_{i}^{\mathsf T}\uvec+b_{i}=\alpha_{i}^{\mathsf T}\zvec+c_{i},
\end{eqnarray*}
where $\avec_{i}=(\beta_{i},\alpha_{i}),\beta_{i}\in\mathbb{R}^{m},\alpha_{i}\in\mathbb{R}^{n-m}$, and $c_{i}=\beta_{i}^{\mathsf T}\uvec+b_{i}\in\mathbb{R}$.

Let $J=\{i\mid \alpha_{i}\neq\textbf{0}_{n-m}\}$. If $J=\emptyset$, then every $f_{i}(\uvec,\zvec)$ is independent of $\zvec$. Hence, for any two distinct $\zvec^{1}$ and $\zvec^{2}$, we have $\xvec(1;\xvec^{1}(0))=\xvec(1;\xvec^{2}(0))$, where $\xvec^{1}(0)=(\uvec,\zvec^{1})$ and $\xvec^{2}(0)=(\uvec,\zvec^{2})$.

Now suppose that $J\neq\emptyset$. For each $i\in J$, the set $H_{i}:=\{\vvec\in\mathbb{R}^{n-m}\mid\alpha_{i}^{\mathsf T}\vvec=0\}$ is a proper hyperplane. Since the finite union of proper hyperplanes cannot cover $\mathbb{R}^{n-m}$, there exist a vector $\vvec\notin\cup_{i\in J}H_{i}$. Hence, $\alpha_{i}^{\mathsf T}\vvec\neq0$ for every $i\in J$.

Since $J$ is finite, we choose a sufficiently large $r>0$ such that, for every $i\in J$,
\begin{eqnarray*}
\operatorname{sgn}(f_{i}(\uvec,r\vvec))=\operatorname{sgn}(\alpha_{i}^{\mathsf T}\vvec),\quad{\rm and}\quad
\operatorname{sgn}(f_{i}(\uvec,-r\vvec))=-\operatorname{sgn}(\alpha_{i}^{\mathsf T}\vvec).
\end{eqnarray*}
In particular, neither $f_{i}(\uvec,r\vvec)$ nor $f_{i}(\uvec,-r\vvec)$ is zero for any $i\in J$. Therefore, for each $i\in J$, exactly one of $f_{i}(\uvec,r\vvec)$ and $f_{i}(\uvec,-r\vvec)$ is positive. It follows that
\begin{eqnarray*}
|\{i\in J\mid f_{i}(\uvec,r\vvec)>0\}|+|\{i\in J\mid f_{i}(\uvec,-r\vvec)>0\}|=|J|\leq n.
\end{eqnarray*}
Consequently, at least one of $r\vvec$ and $-r\vvec$, denoted by $\zvec^{\star}$ satisfies
\begin{eqnarray*}
|\{i\in J\mid f_{i}(\uvec,\zvec^{\star})>0\}|\leq \frac{n}{2}.
\end{eqnarray*}

Define $I:=\{i\in J\mid f_{i}(\uvec,\zvec^{\star})>0\}$. Since $|I|\leq\frac{n}{2}<n-m$, the homogeneous linear system $\alpha_{i}^{\mathsf T}\zvec=0,~i\in I$, has a nonzero solution $\hvec\in\mathbb{R}^{n-m}$.

For every $i\in I$ and every scalar $\varepsilon$, we have
\begin{eqnarray*}
f_{i}(\uvec,\zvec^{\star}+\varepsilon\hvec)=f_{i}(\uvec,\zvec^{\star}),
\end{eqnarray*}
because $\alpha_{i}^{\mathsf T}\hvec=0$.

For every $i\in J\setminus I$, we have $f_{i}(\uvec,\zvec^{\star})<0$. Since $J\setminus I$ is finite and each $f_{i}$ is continuous, there exists a sufficiently small nonzero scalar $\epsilon$ such that $f_{i}(\uvec,\zvec^{\star}+\epsilon\hvec)<0$ for every $i\in J\setminus I$.

For every $i\notin J$, the function $f_{i}(\uvec,\zvec)$ is independent of $\zvec$, and hence
$f_{i}(\uvec,\zvec^{\star}+\epsilon\hvec)=f_{i}(\uvec,\zvec^{\star})$.

Therefore, for every $i\in\llbracket1,n\rrbracket$, we have
\begin{eqnarray*}
\max(f_{i}(\uvec,\zvec^{\star}+\epsilon\hvec),0)=\max(f_{i}(\uvec,\zvec^{\star}),0).
\end{eqnarray*}
Set $\zvec^{1}=\zvec^{\star}$ and $\zvec^{2}=\zvec^{\star}+\epsilon\hvec$. Since $\epsilon\neq0$ and $\hvec\neq\textbf{0}_{n-m}$, we have $\zvec^{1}\neq\zvec^{2}$. Moreover, $\xvec(1;\xvec^{1}(0))=\xvec(1;\xvec^{2}(0))$, where $\xvec^{1}(0)=(\uvec,\zvec^{1})$ and $\xvec^{2}(0)=(\uvec,\zvec^{2})$.

Thus, there exist two distinct initial states that generate the same output sequence. Therefore, the system is not observable whenever $m<\frac{n}{2}$.
\end{proof}
\section{Simulation results}\label{sec:simulation results}
It is generally difficult to directly determine the minimum number of observation nodes for ReLU networks with real-valued states. Therefore, we perform a numerical experiment to verify the lower bound on the minimum number of observation nodes.

We randomly generate 100 ReLU networks with $n$ nodes, where each node has indegree $K$. The initial state is restricted to $\xvec(0)\in[-B,B]^{n}$. For each network, we test observation sets $O\subseteq\{1,2,\ldots,n\}$ with $|O|=m$, which are sampled uniformly at random without replacement. Exhaustive enumeration of all possible observation sets is avoided due to the combinatorial explosion.

For each sampled observation set $O$, we search for a counterexample, namely two distinct initial states $\xvec^{1}(0),\xvec^{2}(0)$ that generate two output sequences $y_{O}^{1}(t)$ and $y_{O}^{2}(t)$, where $y_{O}(t)=x_{O}(t)$, satisfying $\max_{t\leq T}\|y_{O}^{1}(t)-y_{O}^{2}(t)\|_{\infty}\leq \delta$. If such a counterexample is found, the observation set $O$ is regarded as non-observable.

In our experiments with $(n,K)=(10,5),(12,6),(14,7),B=50,\delta=10^{-2}$, and $T=40$, we consistently find counterexamples for all tested observation sets with $|O|<\frac{n}{2}$ across all 100 randomly generated networks (see Fig. \ref{fig}). This indicates that observation sets of size less than $\frac{n}{2}$ are empirically insufficient, thereby providing strong empirical evidence supporting the lower bound $m\geq \frac{n}{2}$.
\begin{figure}[t]
  \centering
  \includegraphics[width=0.45\textwidth]{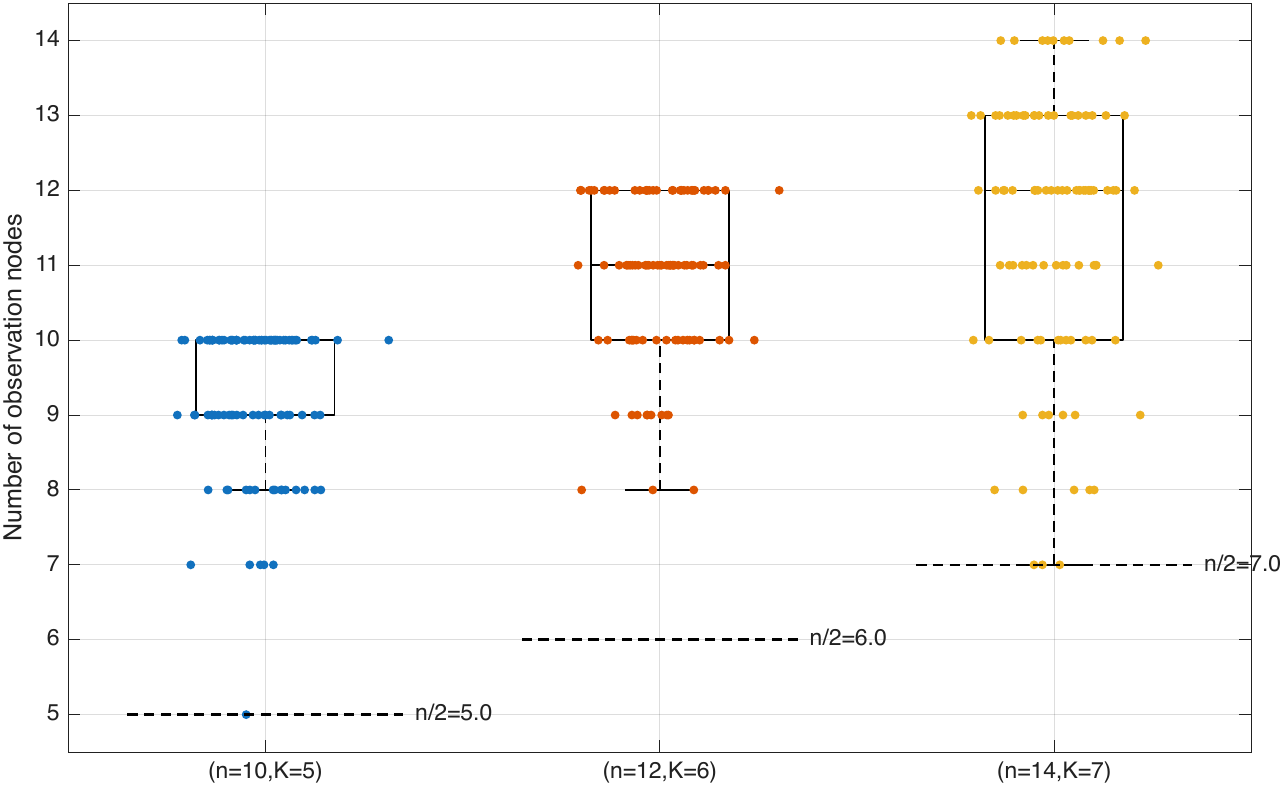}
  \caption{Empirical verification of the lower bound on the number of observation nodes for ReLU networks with different network sizes $n$ and indegrees $K$. In each case, all tested observation sets with $|O|<n/2$ admit counterexamples across 100 randomly generated networks, indicating that the minimum number of observation nodes empirically satisfies $m\geq n/2$.}
  \label{fig}
\end{figure}
\section{Conclusion}\label{sec:conclusion}
This paper investigated minimum observation node in LT and ReLU networks from two complementary perspectives: the effect of the node update rule over a common binary state domain and the effect of the admissible state domain within the ReLU framework.

For binary-state dynamics, we constructed a class of $K$-LT networks whose initial states can be uniquely determined from the finite output sequence of a single observation node. Since at least one observation node is necessary, this establishes that the best-case minimum for the considered $K$-LT class is exactly one. We also established a dynamical equivalence between binary $K$-ReLU networks and $K$-AND BNs, allowing the corresponding $K$-dependent observation-node bounds to be transferred to binary $K$-ReLU networks. This comparison shows that, even over the same binary state domain, the update rule can fundamentally alter the minimum number of observation nodes and determine whether information from unmeasured states can be encoded into the temporal outputs of a small set of measured nodes.

For positive ReLU networks, where all relevant ReLU units remain active, the dynamics reduce to a linear system and the finite-horizon observability problem is governed by the classical Kalman observability condition. General real-valued states lead to a fundamentally different situation because the activation pattern may vary with the initial condition. On any nonempty open region associated with a fixed activation sequence, the finite-horizon observation map is affine. We proved that this map cannot be injective if the rank of its linear part is smaller than the dimension of the freely varying initial-state components. This affine-region rank obstruction yields a general lower bound of $\frac{n}{2}$ observation nodes, without imposing any restriction on the number of state variables on which each node update depends. When $n=2K$, we constructed a $K$-ReLU network and provided an explicit finite-step procedure for reconstructing its entire initial state from measurements of exactly $K$ nodes. The construction attains the general lower bound, proving that the best-case minimum for the considered class of $K$-ReLU networks is exactly $K$.

These results demonstrate that extremal observation requirements are determined not only by network size, but also by the node update rule, the admissible state domain, and the resulting activation patterns. In finite-state threshold dynamics, information distributed across different state variables can be encoded into the temporal sequence of a single measured node. In general real-valued ReLU dynamics, by contrast, state-dependent deactivation can eliminate observable directions and leave distinct initial states indistinguishable. The rank obstruction provides a certificate of nonobservability for insufficient sensing configurations, whereas the explicit constructions characterize the best observation performance attainable within the corresponding network classes.

\end{document}